\newtheorem{lemma}{Lemma}
\crefname{proposition}{Proposition}{Propositions}
\crefname{theorem}{Theorem}{Theorems}
\crefname{figure}{Figure}{Figures}
\crefname{lemma}{Lemma}{Lemmas}
\crefname{proper}{Property}{Properties}
\crefname{section}{Section}{Sections}
\newcommand{\R}{{\rm I\kern-2pt R}}
\newcommand{\N}{{\rm I\kern-2ptN}}
\newenvironment{lemProof}{\par\noindent \emph{Proof.}}{\hfill$\triangleleft$\par\smallskip}
\begin{document}

\title{On the Edge-Vertex Ratio of Maximal Thrackles}


\author[Aichholzer]{Oswin Aichholzer}
\address[A1,A5]{Institute of Software Technology, Graz University of Technology, Austria}
\email[A1,A5]{[oaich,bvogt]@ist.tugraz.at}

\author[Kleist]{Linda Kleist}
\address[A2]{Technische Universit{\"a}t Braunschweig, Germany}
\email[A2]{kleist@ibr.cs.tu-bs.de}

\author[Klemz]{Boris Klemz}
\address[A3]{Institute of Computer Science, Freie Universit{\"a}t Berlin,Germany}
\email[A3]{klemz@inf.fu-berlin.de} 

\author[Schr\"oder]{Felix Schr\"oder}
\address[A4]{Technische Universit{\"a}t Berlin, Germany}
\email[A4]{fschroed@math.tu-berlin.de}

\author[Vogtenhuber]{Birgit Vogtenhuber}


%
\maketitle


\begin{abstract}
A drawing of a graph in the plane is a \emph{thrackle} if every pair of edges intersects exactly once, either at a common vertex or at a proper crossing. Conway's conjecture states that a thrackle has at most as many edges as vertices.
In this paper, we investigate the edge-vertex ratio of \emph{maximal thrackles}, that is, thrackles in which no edge between already existing vertices can be inserted such that the resulting drawing remains a thrackle. For maximal geometric and topological thrackles, we show that the edge-vertex ratio can be arbitrarily small.  When forbidding isolated vertices, the edge-vertex ratio of maximal geometric thrackles can be arbitrarily close to the natural lower bound of $\nicefrac{1}{2}$.  For maximal topological thrackles without isolated vertices, we present an infinite family with an edge-vertex ratio of~$\nicefrac{5}{6}$.
\end{abstract}

\section{Introduction}

A drawing of a graph in the plane is a \emph{thrackle} if every pair of edges intersects exactly once, either at a common vertex or at a proper crossing. Conway's conjecture from the 1960s states that a thrackle has at most as many edges as vertices~\cite{conway1972}.
While it is known that the conjecture holds true for geometric thrackles in which edges are drawn as straight-line segments~\cite{monotone}, it is widely open in general. In this paper, we investigate 
\emph{maximal thrackles}. A thrackle is \emph{maximal} if no edge between already existing vertices can be inserted such that the resulting drawing remains a thrackle. Our work is partially motivated by the results of Hajnal et al.~\cite{saturated2017} on saturated $k$-simple graphs. A graph is $k$-simple if every pair of edges has at most $k$ common points, either proper crossings and/or a common endpoint. A $k$-simple graph is \emph{saturated} if no further edge can be added while maintaining th $k$-simple property. In~\cite{saturated2017}, saturated simple graphs on $n$ vertices with only $7n$ edges are constructed, as well as saturated $2$-simple graphs on $n$ vertices with $14.5n$ edges.

If true, Conway's conjecture implies that in every thrackle the ratio between the number of edges and the number of vertices is at most 1. We denote the edge-vertex ratio of a thrackle $T$ by $\varepsilon(T)$.
In this paper, we investigate the other extreme, namely maximal thrackles with a low edge-vertex ratio.

In Section~\ref{sec:geometric}, we consider geometric thrackles.
We show that for this class the edge-vertex ratio can be arbitrarily small. This is done by a construction that allows to add isolated vertices while maintaining maximality.
 If we disallow isolated vertices, then a natural lower bound for the edge-vertex ratio is $\frac{1}{2}$. A similar construction can be used to get arbitrarily close to this bound.
\begin{restatable}{theorem}{geometric}\label{thm:geometric}
	For any $c>0$, there exist infinitely many
	\begin{compactenum}[a)]
		\item maximal geometric thrackles $T_a$ such that $\varepsilon(T_a)<c$, as well as 
		\item maximal geometric thrackles $T_b$ without isolated vertices such that  $\varepsilon(T_b)<\frac{1}{2} +c$.
	\end{compactenum} 
\end{restatable}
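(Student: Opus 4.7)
The plan is to prove (a) and (b) via two explicit constructions, each based on the strategy of ``protecting'' many sparsely-distributed vertices using the crossing structure of a small subthrackle.

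For part (a), I would draw a triangle $T_3 = abc$ as a standard straight-line triangle and add $k$ further vertices in generic positions in its open interior. The triangle itself is a maximal geometric thrackle on three vertices: each pair of its three edges shares a vertex, and no additional edge fits on three vertices. For the extended drawing $T_a^{(k)}$ I would verify maximality by classifying candidate new edges: if both endpoints lie in the interior, the segment is entirely contained in the triangle and crosses no side; if one endpoint is a corner (say $a$) and the other is interior, the segment lies in the closed triangle and does not cross the opposite side $bc$. In either case the candidate edge shares zero points with some triangle edge, violating the thrackle condition. Since $|V(T_a^{(k)})| = 3+k$ and $|E(T_a^{(k)})| = 3$, the ratio $3/(3+k)$ drops below any prescribed $c > 0$ for $k$ sufficiently large, producing an infinite family.

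For part (b), the triangle idea does not transfer directly: a short pendant edge placed inside the triangle cannot cross all three sides, since a straight segment meets a convex polygon's boundary at most twice; thus the triangle cannot host matched pairs as thrackle edges. My plan is instead to use a base of $n$ pairwise-crossing segments all passing through a common non-vertex point $p$. Every pair of base edges then shares exactly $p$, so this is a thrackle on $2n$ vertices and $n$ edges. To achieve maximality I would place the $2n$ endpoints of the base segments in a carefully non-convex arrangement around $p$ (for instance, endpoints clustered on two small arcs flanking $p$) and, if necessary, add a constant number of extra ``blocker'' edges, ensuring that no chord between two existing endpoints crosses every remaining base edge. The resulting drawing $T_b$ would have $n + O(1)$ edges on $2n$ vertices, giving $\varepsilon(T_b) = 1/2 + O(1/n) < 1/2 + c$ for $n$ large.

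The step I expect to be the hardest is the maximality verification in (b). There are $\Theta(n^2)$ potential new chords between endpoints, each of which shares endpoints with at most two base edges and therefore must cross all remaining $n-2$ base edges to form a legal extension. In a naive convex-position placement of endpoints on a circle around $p$, the chord between the two extremal endpoints on a side does cross every other base edge, so such placements are not maximal; this forces the non-convex arrangement. Working out which non-convex placement guarantees that every candidate chord misses at least one remaining base edge, and adding the minimum number of blocker edges to handle the remaining exceptional cases, is the core technical content. Once a suitable arrangement is fixed, the case analysis reduces to checking a bounded number of chord types (within-arc chords on each side and cross-side chords), after which the counting estimate immediately yields the theorem.
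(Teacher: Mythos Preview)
Your argument for part (a) is correct and in fact simpler than the paper's. The paper builds a ten-vertex, seven-edge ``butterfly'' configuration and then adds isolated vertices in a carefully chosen region; your triangle-with-interior-points construction reaches the same conclusion with a three-line case check. (The paper presumably chose the more elaborate gadget because it is reused as the base for part (b); your triangle cannot play that role, as you correctly observe.)

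For part (b), your high-level strategy coincides with the paper's: take many segments through a common non-vertex point and attach a constant-size blocker gadget to enforce maximality. However, your proposal leaves the actual blocker construction and the maximality verification undone --- you explicitly flag this as ``the core technical content'' without supplying it. This is a genuine gap: you note yourself that naive placements (convex position around $p$) fail, and it is not at all clear a priori that some non-convex arrangement plus $O(1)$ extra edges suffices, nor what those edges should be. The paper resolves this by reusing the butterfly as the blocker. The $m$ new segments $u_iv_i$ all pass through a single point on the butterfly's short central edge, with the $u_i$ placed on the line through the two top corners and the $v_i$ on the line through the two bottom corners; the slopes are chosen inductively so that the lines $u_ib_3$ and $v_it_6$ separate all previously placed endpoints from the central edge. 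The upshot is that every candidate chord from a new endpoint to any existing vertex is disjoint either from the central edge or from one of the four long edges of the butterfly, so the maximality check collapses to a bounded case analysis governed by the butterfly's structure rather than a $\Theta(n^2)$ chord-by-chord argument. Without a concrete gadget of this kind, your part (b) remains a plan rather than a proof.
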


We then consider topological thrackles in Section~\ref{sec:topological-isolated}. Similarly as before, we show that the edge-vertex ratio can approach zero using isolated vertices.
\begin{restatable}{theorem}{withIso}\label{thm:smallDensity}
	For every $c>0$, there are infinitely many maximal thrackles $T'$ with $\varepsilon(T')<c$.
\end{restatable}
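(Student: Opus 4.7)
\begin{proofsketch}
The plan mirrors the construction in \Cref{thm:geometric}(a): take a small maximal topological thrackle $T_0$ and add $k$ isolated vertices placed so that the resulting drawing is still maximal. Since $|E(T_0)|$ stays fixed while the vertex count grows to $|V(T_0)|+k$, varying $k$ yields an infinite family of maximal thrackles with edge-vertex ratio $|E(T_0)|/(|V(T_0)|+k)$, which can be made smaller than any prescribed $c > 0$.

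The key step is to choose $T_0$ together with a face $F$ of its arrangement (the planar subdivision obtained by treating the crossings of $T_0$ as additional vertices) so that no thrackle edge can be added between two isolated vertices placed in $F$, nor from an isolated vertex in $F$ to a vertex of $T_0$. For a pair of isolated vertices $v_1, v_2 \in F$, the main tool is a parity argument: viewing any curve from $v_1$ to $v_2$ as a loop in $\mathbb{R}^2 \setminus V(T_0)$ relative to a short baseline inside $F$, the parity of crossings with an edge $e_i = \{a_i, b_i\}$ equals $k_{a_i} + k_{b_i} \pmod 2$, where $k_w$ is the winding number of the loop around $w$. Demanding exactly one crossing per edge gives the $\mathbb{F}_2$-linear system $k_{a_i} + k_{b_i} \equiv 1$, which is solvable if and only if $T_0$ is bipartite; choosing $T_0$ non-bipartite therefore blocks all such edges. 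For edges from $v \in F$ to a vertex $u$ of $T_0$, a refined version of the same argument applies to the non-$u$-incident edges of $T_0$, since the parity constraint for a $u$-incident edge can be satisfied by adjusting the curve's approach angle at $u$; the refined system now lives over $V(T_0) \setminus \{u\}$ and is solvable when $T_0 - u$ is bipartite.

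The main obstacle is exhibiting a concrete thrackleable $T_0$ together with a face $F$ that blocks all additions simultaneously, both by parity and by topological considerations where parity alone does not suffice. Structural restrictions on thrackleable graphs (for instance, at most one cycle per component) rule out simple candidates such as two disjoint odd cycles, which already fail by a direct parity count of crossings with the boundary of one triangle. Nevertheless, I expect a small, explicit example combining an odd cycle with a few additional edges or a second component to work, with the full verification reducing to finite $\mathbb{F}_2$-linear algebra together with a direct maximality check of the core $T_0$. Once such a pair $(T_0, F)$ is fixed, inserting $k$ isolated vertices into $F$ and letting $k \to \infty$ yields the required family.
\end{proofsketch}
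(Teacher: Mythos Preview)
Your overall plan—fix a small maximal topological thrackle $T_0$, drop isolated vertices into a chosen face, and check that no edge can be added—is exactly the paper's strategy. The winding-number argument you give for an edge between two isolated vertices (your Case~1) is correct and is a pleasant alternative to the paper's direct analysis of that case: if $T_0$ contains an odd cycle, the $\mathbb{F}_2$-system $k_a+k_b\equiv 1$ is inconsistent and no such edge exists, regardless of which face the isolated vertices sit in.

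The gap is the isolated-to-vertex case, and it is the crux of the proof. Your refined parity system over $V(T_0)\setminus\{u\}$ gives an obstruction only when $T_0-u$ still contains an odd cycle. For the natural non-bipartite candidates—an odd cycle, possibly with pendants—deleting any vertex $u$ on the cycle leaves a forest, so the parity obstruction disappears for precisely those $u$; and two vertex-disjoint odd cycles are known not to be simultaneously thrackleable, so you cannot sidestep this by adding a second odd component. You acknowledge that ``topological considerations'' would then be needed, but you neither fix a concrete $T_0$ nor supply those considerations, and they are exactly the substance of the argument. The paper does not use parity at all: it takes a specific thrackle drawing of $C_6$ (bipartite, so your Case~1 shortcut would not even apply) with a central triangular face $f_0$, and disposes of all three cases by an explicit finite case analysis, following the hypothetical new edge across successive face boundaries and exhibiting a closed curve of already-crossed edges that traps it. That direct analysis is precisely the work your sketch postpones.
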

Note that \cref{thm:smallDensity} is not just a trivial implication of \cref{thm:geometric}, as a maximal geometric thrackle is not necessarily a maximal topological thrackle.
As our main result, in Section~\ref{sec:topological-family}, we show that there exists an infinite family of thrackles without isolated vertices which has an edge-vertex ratio of $\frac{5}{6}$.      
\begin{restatable}{theorem}{fourOverFive}\label{thm:4/5} 
	There exists an infinite family of thrackles $\mathcal{F}$ without isolated vertices, such that for all $T\in\mathcal{F}$ it holds that  $\varepsilon(T)=\frac{5}{6}$.
\end{restatable}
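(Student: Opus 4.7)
The plan is to construct, for each $k\geq 1$, an explicit maximal topological thrackle $T_k$ with $6k$ vertices and $5k$ edges, so that the family $\mathcal{F}=\{T_k:k\geq 1\}$ is infinite and every member has edge-vertex ratio exactly $\frac{5}{6}$.

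First, I would design a base gadget $H$ on $6$ vertices and $5$ edges, together with a topological thrackle drawing $D_H$ confined to a topological disk, that is \emph{maximal} as a thrackle on its own. Since a $6$-vertex graph with $5$ edges and no isolated vertex must be a tree, the choice of $H$ is a tree---natural candidates are carefully twisted caterpillars, or small spiders whose legs weave through each other so that every pair of non-edges is obstructed. The rotation system at every vertex should be arranged so that any candidate missing edge either misses some existing edge or crosses it at least twice. A key additional property I would enforce is \emph{robust} maximality: each missing edge must be obstructed by a local topological witness that remains valid even when further curves pass through $D_H$ from outside.

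Next, I would describe how to compose $k$ disjoint copies $D_H^{(1)}, \ldots, D_H^{(k)}$ into a single thrackle drawing $T_k$. The key tool is the standard ``winding'' technique: each copy $D_H^{(i+1)}$ is drawn so that every one of its edges winds once around the union of the earlier copies $D_H^{(1)}\cup\cdots\cup D_H^{(i)}$ before reaching its endpoints. This guarantees by a Jordan-curve argument that each new edge crosses each previously drawn edge exactly once, preserving the thrackle property across copies. Maximality of $T_k$ then follows in two parts: first, from the robustness property of $D_H$ established in the base step, no new edge \emph{inside} any single copy can be added; second, any candidate \emph{inter-copy} edge would have to weave through several winding edges and therefore necessarily pick up some double or missing crossing.

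The main obstacle is the first step: finding a $6$-vertex, $5$-edge gadget whose drawing is not just maximal in isolation, but robustly so. Concretely, the obstruction blocking each missing edge of $H$ inside $D_H$ must be invariant under the later addition of crossings by external curves. A clean way to enforce this is to arrange for each missing edge to violate a parity invariant (for instance, a winding-number parity around a specific face of $D_H$) that is immune to external perturbation. Once this robustness is secured, step two reduces to a straightforward case analysis on winding numbers, and the entire construction yields the desired infinite family with $\varepsilon(T_k)=\frac{5}{6}$ for every $k\geq 1$.
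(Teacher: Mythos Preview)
Your proposal is a plan rather than a proof: none of its three steps is actually carried out. You never exhibit a concrete $6$-vertex, $5$-edge gadget $H$ with a robustly maximal drawing, and you yourself flag this as ``the main obstacle''; without it nothing downstream can be checked. (A side remark: a $6$-vertex, $5$-edge graph with no isolated vertices need not be a tree---$C_3$ together with a disjoint $P_3$ is a counterexample---so even the shape of the candidate gadget is not pinned down.)

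Even granting an ideal gadget, the composition step has real gaps. Saying that each edge of copy $i{+}1$ ``winds once around'' the earlier copies does not by itself guarantee exactly one crossing with every earlier edge, and it says nothing about why the edges \emph{within} copy $i{+}1$ still cross each other exactly once after all that winding. More importantly, the maximality claim for inter-copy edges (``would necessarily pick up some double or missing crossing'') is only an assertion. Given a vertex $p$ in copy~$1$ and a vertex $q$ in copy~$2$, there is no a~priori obstruction to threading an arc from $p$ to $q$ through the picture crossing every edge exactly once; ruling this out is exactly the heart of the problem and needs either a concrete drawing with a detailed case analysis or a genuine topological invariant, neither of which you supply.

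For comparison, the paper does \emph{not} compose disjoint disk-confined gadgets. It first builds a single connected maximal thrackle $T_1$---a specific drawing of $C_{4n+2}$ obtained by duplicating the star-shaped thrackle of $C_{2n+1}$---and proves its maximality by a four-case analysis. Then, for every edge $e$ of $T_1$, it threads a copy $K_e$ of Kyn\v{c}l's example along $e$ (identifying one vertex of $K_e$ with an endpoint of $e$), contributing five new vertices and four new edges per edge of $T_1$ and yielding the ratio $\tfrac{5}{6}$. Maximality of the resulting drawing $T_2$ is then \emph{reduced} to that of $T_1$: a sequence of local rerouting lemmas shows that any hypothetical extra edge in $T_2$ can be modified so that both endpoints become non-adjacent vertices of $T_1$, contradicting the maximality of $T_1$. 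The global backbone $T_1$ is what carries the inter-gadget maximality---precisely the piece your disjoint-copies plan leaves unaddressed.
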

Our construction is based on an example presented by Kyn\v{c}l~\cite{DBLP:journals/dcg/Kyncl13} in the context of simple drawings where he showed that not every simple drawing can be extended to a simple drawing of the complete graph. The example was also used in~\cite{DBLP:journals/comgeo/KynclPRT15} for a related problem.\medskip

\textbf{Related Work.}
In one of the first works on Conway's Thrackle Conjecture, Woodall~\cite{woodall1969} characterized all thrackles under the assumption that the conjecture is true. For example, he showed that a cycle $C_n$ has a thrackle embedding with straight edges if and only if $n$ is odd. It is not hard to come up with other graphs on $n$ vertices with $n$ edges that have a  thrackle embedding, but adding an additional edge always seems to be impossible. Consequently, two lines of research emerged from Conway's conjecture. In the first, the goal is to prove the conjecture for special classes of drawings, while the second direction aims for upper bounds on the number of pairwise crossing or incident edges in any simple topological drawing with $n$ vertices.

For straight line drawings of thrackles, so called \emph{geometric} thrackles, already Erd\H{o}s provided a proof for the conjecture, actually answering a question from 1934 by Hopf and Pannwitz on distances between points. Probably the most elegant argument is due to Perles and can be found in~\cite{monotone}. 
Extending geometric drawings, a drawing is called \emph{$x$-monotone} if each curve representing an edge is intersected by every vertical line in at most one point. In the same paper, Pach and Sterling~\cite{monotone} show that the conjecture holds for $x$-monotone drawings by imposing a partial order on the edges.

A drawing of a graph is called \emph{outerplanar} if its vertices lie on a circle and its edges are represented by continuous curves contained in the interior of this circle. In~\cite{outerplanar} several properties for outerplanar thrackles are shown, with the final result that outerplanar thrackles are another class where the conjecture is true. Misereh and Nikolayevsky~\cite{mn2018} generalized this further to thrackle drawings where all vertices lie on the boundaries of $d \leq 3$ connected domains which are  in the complement of the drawing. They characterize annular thrackles ($d=2$) and pants thrackles ($d=3$) and show that in all cases Conway's conjecture holds.
Finally, Cairns, Koussas, and Nikolayevsky~\cite{spherical} prove that the conjecture holds for spherical thrackles, that is, thrackles drawn on the sphere such that the edges are arcs of great circles.

In a similar direction, several attempts show that some types of thrackles are \emph{non-extensible}.
A thrackle is called non-extensible if it cannot be a subthrackle of a counterexample to Conway's
conjecture. Wehner~\cite{wehner2013} stated the hypothesis that a potential counterexample
to Conway's conjecture would have certain graphtheoretic properties. Li, Daniels, and Rybnikov~\cite{li2006study} support this hypothesis by reducing Conway's conjecture to the
problem of proving that thrackles from a special class (which they call 1-2-3 group) are non-extensible. Actually, already Woodall~\cite{woodall1969} had shown that if the conjecture is false, then there exists a counterexample consisting of two even cycles that share a vertex.

On the negative side, we mention tangled- and generalized thrackles.
A tangled-thrackle is a thrackle where two edges can have a common point of tangency instead of a proper crossing. Besides the fact that tangled-thrackles with at least $\lfloor 7n/6 \rfloor$ edges are known~\cite{DBLP:conf/s-egc/PachRT11} -- and therefore Conway's conjecture can not be extended to tangled-thrackles -- Ruiz-Vargas, Suk, and T{\'o}th~\cite{ruiz2016disjoint} show that the number of edges for tangled-thrackles is $O(n)$. A \emph{generalized} thrackle is a drawing where any pair of edges shares an odd number of points. Lov\'asz, Pach, and Szegedy~\cite{lovasz1997} showed that a bipartite graph can be drawn as a generalized
thrackle if and only if it is planar. As planar bipartite graphs can have up to
$2n-4$ edges, this implies that generalized thrackles exist with a edge-vertex ratio close to 2. A tight upper bound of $2n-2$ edges for generalized thrackles was later provided by Cairns and Nikolayevsky~\cite{genThrackle}.

The race for an upper bound on the number $m$ of edges of a thrackle was started by the two just mentioned papers. Lov\'asz, Pach, and Szegedy~\cite{lovasz1997} provided the first linear bound of $m\leq2n-3$ and Cairns and Nikolayevsky~\cite{genThrackle} improved this to $m\leq\frac{3}{2}(n-1)$. They also consider more general drawings of thrackles on closed orientable surfaces; see also~\cite{Cairns2009}.

By exploiting certain properties of the structure of possible counterexamples, Fulek and Pach~\cite{FulekPach2011} gave an algorithm that, for any $c > 0$, decides whether the number of edges are at most $(1 + c)n$ for all thrackles. As the running time of this algorithm is exponential in $1/c$, the possible improvement by the algorithm is limited, but the authors managed to show an upper bound of $m\leq\frac{167}{117}n \approx 1.428n$.
Combining several previous results in a clever way, Goddyn and Xu~\cite{bounds2017} slightly improved this bound to $m\leq1.4n-1.4$. Among other observations they also used the fact that it was known that Conway's conjecture holds for $n \leq 11$.
This has been improved to $n \leq 12$ in the course of enumerating all path-thrackles for $n$ up to 12 in~\cite{pammer2014}. The currently best known upper bound of $m \leq 1.3984n$ is again provided by Fulek and Pach~\cite{FulekPach2019}. They also show that for \emph{quasi-}thrackles Conway's conjecture does not hold. A quasi-thrackle is a thrackle where two edges that do not share a vertex are allowed to cross an odd number of times. For this class they provide an upper bound of $m \leq \frac{3}{2}(n-1)$ and show that this bound is tight for infinitely many values of $n$.

\newpage
\section{Geometric thrackles}
 
\label{sec:geometric}

For maximal geometric thrackles, the edge-vertex ratio can be arbitrarily small. Even if we forbid isolated vertices, it may be arbitrarily close to the natural lower bound of $\frac{1}{2}$, which is implied by the handshaking lemma.

\geometric*

\begin{proof}
	Consider the thrackle~$T$ formed by the seven dark, thick edges in \cref{fig:butterfly1,} which we call the \emph{butterfly}.
		\begin{figure}[htb]
			\centering
			\includegraphics[page=22,scale=.9]{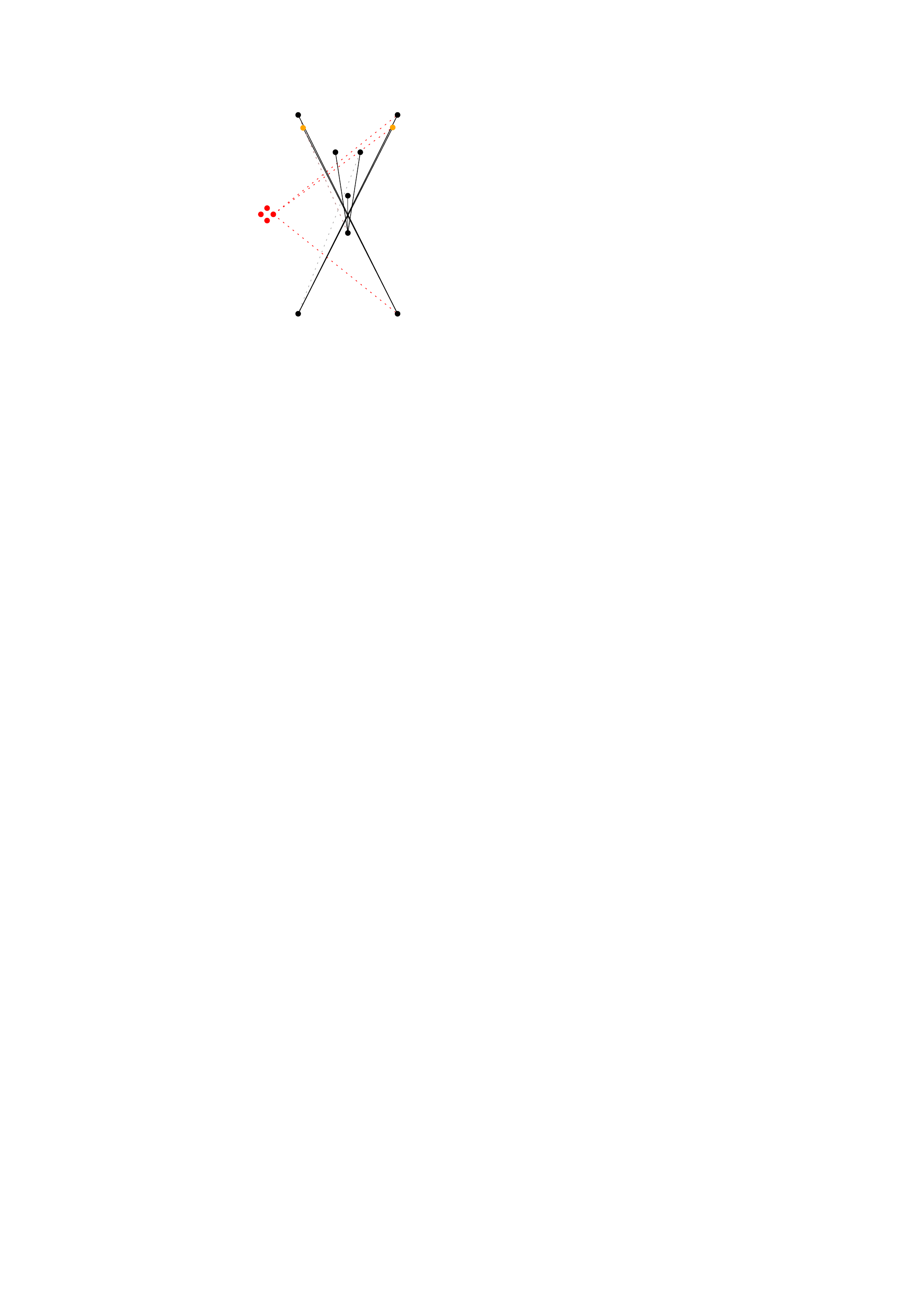}
			\caption{The butterfly~$T$ (thick, dark edges). Important segments between nonadjacent vertices are indicated in light gray. The thrackle~$T_a$ is obtained by adding multiple isolated vertices in the region~$R$.}
			\label{fig:butterfly1}
	\end{figure}
	The butterfly consists of seven edges and ten vertices.
	The lower endpoint of each edge belongs to the set~$\lbrace b_1,b_2,b_3\rbrace$ of \emph{bottom} vertices and the upper endpoint belongs to the set~$\lbrace t_1,t_2,\dots,t_7\rbrace$ of \emph{top} vertices.
	The endpoints of two independent edges~$b_1t_2$ and~$b_2t_1$ are the four corners of the bounding rectangle of the butterfly.
	These edges have a common point of intersection with the short vertical \emph{central} edge~$b_3t_6$.
	The edge~$b_1t_2$ is adjacent to another edge~$b_1t_7$, which is drawn to the right of, and very close to~$b_1t_2$ such that~$b_1t_7$ is disjoint from the segment~$b_3t_2$ and~$\mathrm{y}(t_6)<\mathrm{y}(t_7)$.
	The neighbor~$b_2t_5$ of~$b_2t_1$ is defined symmetrically.
	The edges~$b_1t_2$, $b_1t_7$, $b_2t_1$, and~$b_2t_5$ are called \emph{long} edges.
	The final two edges are adjacent to the central edge.
	Their upper endpoints~$t_3$ and~$t_4$ are placed on the line~$y=\mathrm{y}(t_7)$ to the left and right of the central edge, respectively, such that~$t_4$ (and, consequently,~$t_3$, $t_5$, and~$t_1$) are above the line~$b_1t_6$ and, symmetrically, the points $t_3,t_4,t_7$, and~$t_2$ are above~$b_2t_6$.

	The butterfly is a maximal thrackle:
	Any segment between the
	bottom vertices
	or between the
	top vertices
	is disjoint from the
	central edge
	or from one of the
	long edges.
	It remains to consider the segments that have one bottom and one top vertex as an endpoint.
	All these segments, except for~$b_1t_6$ and~$b_2t_6$, are disjoint from  the central edge or one of the long edges.
	In particular,~$b_3t_1$ is disjoint from~$b_2t_5$ by construction.
	Finally, the two remaining segments~$b_1t_6$ and~$b_2t_6$ are disjoint from~$b_3t_4$ or~$b_3t_3$, respectively.

\noindent To prove the theorem, we extend the butterfly in two different ways.
	\begin{compactenum}[a)]
		\item To obtain $T_a$ from $T$, we insert multiple isolated vertices in a small circular region~$R$ (indicated in \cref{fig:butterfly1}) that is placed to the left of~$t_6$ such that the lower tangent of~$R$ that passes through~$t_6$ is below all top vertices other than~$t_6$, and the upper tangent of~$R$ that passes through~$b_3$ is above all bottom vertices except for~$b_3$.
		
		Note that  the resulting geometric graph remains a maximal thrackle since
		each segment from a vertex in~$R$ to a vertex of~$T$ is disjoint from the central edge, except for segments to~$b_3$ or~$t_6$.
		However, any such segment is disjoint from one of the long edges.
		Moreover, any segment between distinct vertices in~$R$ is disjoint from all edges of~$T$.
		Thus, for any $n>\frac{7}{c}-10$, placing $n$ vertices in $R$ yields a thrackle $T_a$ with $\varepsilon(T_a)=\frac{7}{n+10}<c$.
		
		\item To obtain $T_b$ from $T$, we add several segments~$u_iv_i$ with~$i=1,2,\dots,m$ as indicated in \cref{fig:butterfly2}.
		\begin{figure}[htb]
			\centering
			\includegraphics[page=23,scale=.9]{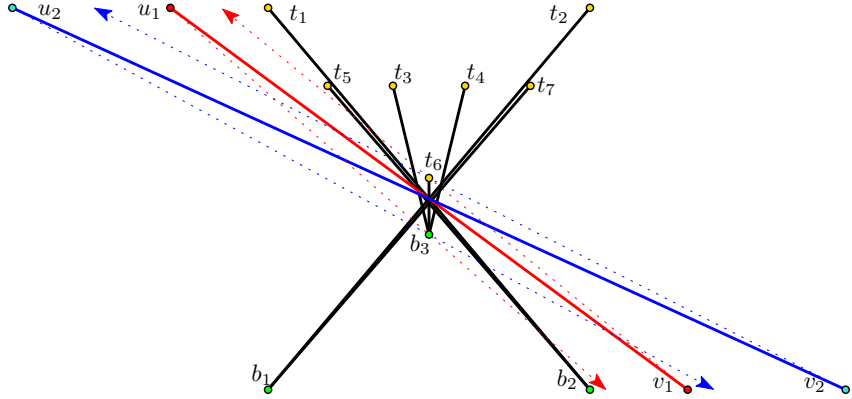}
			\caption{The thrackle~$T_b$ is obtained by adding several segments~$u_iv_i$.}
			\label{fig:butterfly2}
		\end{figure}
		All these segments pass through some common point along the central edge.
		All upper endpoints~$u_i$ are placed on the line through~$t_1$ and $t_2$, and all lower endpoints~$v_i$ are placed on the line through~$b_1$ and $b_2$.
		For each index~$i$, the slope~$\mathrm{s}(u_iv_i)$ is negative.
		Moreover, we have~$\mathrm{s}(u_iv_i)<\mathrm{s}(u_jv_j)$ for~$i<j$.
		
		Suppose that the first~$i-1$ segments have already been created for some~$i\ge 1$.
		Then we choose the slope of~$u_iv_i$ such that the vertices
		\begin{itemize}
		\item $V^+_i=\lbrace v_1,v_2,\dots,v_{i-1}\rbrace \cup \lbrace b_1,b_2\rbrace$ are below the line~$u_ib_3$; and
		\item $V^-_i=\lbrace u_1,u_2,\dots,u_{i-1}\rbrace \cup \lbrace t_1,t_2,t_3,t_4,t_5,t_7\rbrace$ are above the line~$v_it_6$.
		\end{itemize}
		These constraints can be satisfied by simply choosing~$\mathrm{s}(u_iv_i)$ large enough.
		
It is clear that by adding any number of segments in this way, we obtain a thrackle.
Moreover, this thrackle is maximal: by construction, all segments~$u_iw$ and~$v_iw$ with~$w\in V^+_i\cup V^-_i$ are disjoint from the central edge.
On the other hand, every segment with one endpoint in~$\lbrace u_i,v_i\rbrace$ and one  endpoint on the central edge is disjoint from one of the long edges.
Thus, inserting $m$ segments for some $m>\frac{1}{c}-5$ yields a maximal thrackle $T_b$ with edge-vertex ratio $\varepsilon(T_b)=\frac{7+m}{10+2m}<\frac{1}{2}+c$.
\end{compactenum} 
\vspace*{-12pt}
\end{proof}

\section{Topological thrackles of arbitrarily small edge-vertex ratio}
\label{sec:topological-isolated}

In this section, we show that the edge-vertex ratio of a maximal thrackle in the topological setting may be arbitrarily small, unless isolated vertices are forbidden.

\withIso*
\begin{proof}
	Consider the  thrackle~$T$ of a simple cycle on six vertex depicted in Figure~\ref{fig:C6:1}.
	By adding a large enough number of isolated vertices into the central triangular face~$f_0$ of~$T$, we obtain a thrackle~$T'$ with edge-vertex ratio smaller than~$c$.
	To prove the claim, it remains to show that~$T'$ is maximal.
	Towards a contradiction, assume that it is possible to insert an edge~$uv$ into~$T'$ such that the resulting graph is still a thrackle.
	Our plan is to show that~$uv$ is self-intersecting or intersects one of the edges of~$T$ twice, which yields the desired contradiction.
	To this end, we explore the drawing of~$e$, going from~$u$ to~$v$.
	
		\begin{figure}[bh]
		\centering
		\begin{minipage}{.4\textwidth}
			\centering
			\includegraphics[page=16]{butterfly}
			\caption{The thrackle~$T$.}
			\label{fig:C6:1}
		\end{minipage}\hfill
		\begin{minipage}{.4\textwidth}
			\centering
			\includegraphics[page=17]{butterfly}
			\caption{Case~1 in \cref{thm:smallDensity}.}
			\label{fig:C6:2}
		\end{minipage}
	\end{figure}
	
	We distinguish three cases, depending on how many of the vertices~$u,v$ are isolated vertices of~$T'$.
	
	\textbf{Case 1:} Both~$u$ and~$v$ are isolated vertices of~$T'$.
	To begin with, the edge~$uv$ has to leave~$f_0$ and, by symmetry, we may assume that it does so by intersecting~$ab$.
	The thereby entered face~$f_1$ has degree four.
	Consequently, there are three options for~$uv$ to proceed.
	
	First, assume that~$uv$ leaves~$f_1$ by intersecting the edge~$af$, as depicted in Figure~\ref{fig:C6:2}.
	By planarity, in order to reach~$v$, the edge~$uv$ has to intersect the closed curve~$C_1$ formed by parts of~$ab$ and~$af$, and the part of~$uv$ that intersects~$f_1$.
	This implies that~$uv$ intersects itself, or it intersects~$ab$ or~$af$ at least twice, which yields the desired contradiction.
	
	It follows that~$uv$ leaves~$f_1$ via~$cd$ or~$ef$.
	This implies that leaving~$f_0$ via~$f_1$ already requires crossings with two of the three segments~$ab,cd$, and~$ef$ that bound~$f_0$.
	However, traversing~$e$ in reverse, that is, going from~$v$ to~$u$, requires us to leave~$f_0$ via one of the other adjacent faces~$f_2$ and~$f_3$.
	By symmetry, this  requires two additional crossings with the segments~$ab,cd$, and~$ef$.
	Consequently, one of these segments is crossed at least twice, which again yields a contradiction.
	
	Altogether, this shows that it is not possible to add an edge between two isolated vertices of~$T'$.
	
	\textbf{Case 2:} Precisely one of~$u$ and~$v$ is isolated in~$T'$.
	Without loss of generality, we may assume that~$u$ is the isolated endpoint of~$uv$.
	As in the previous case, we may assume that~$uv$ leaves~$f_0$ via~$ab$ and enters~$f_1$.
	Given that~$uv$ has to intersect the edge~$de$ (among others), it has to leave~$f_1$ (by passing through~$af$,~$ef$, or~$cd$).
	
	The case that~$f_1$ is left via~$af$ can be excluded by planarity:
	since~$uv$ has to intersect~$de$, it has to intersect the closed curve~$C_1$, as defined in Case~1; refer to Figure~\ref{fig:C6:2}.
	Note that this holds true even if~$uv$ intersects the short piece of~$de$ located in the interior of~$C_1$, since~$de$ can only be intersected once.
	Consequently,~$uv$ intersects itself, or it intersects~$ab$ or~$af$ at least twice, which yields the desired contradiction.

	It remains to consider the cases that~$uv$ leaves~$f_1$ via~$cd$ or~$ef$, respectively.
	First, consider the former case, for an illustration refer to Figure~\ref{fig:C6:3}.
	Given that~$uv$ has already intersected~$ab$ and~$cd$, it follows that~$v\in \lbrace e,f\rbrace$.
	By planarity, it is not possible that~$v=f$, since this would imply that~$uv$ has to intersect the closed curve~$C_2$, which is composed of parts of the already intersected edges~$ab$ and~$cd$ and the edge~$af$, which is incident to~$f$.
	It follows that~$v=e$.
	At some point, the edge~$uv$ intersects the edge~$af$ in its interior and, thereby, enters the region interior to~$C_2$ that does not contain~$e$.
	However, the edges bounding~$C_2$ have now all been intersected and, hence, it is no longer possible to reach~$e$.
	
		\begin{figure}[htb]
		\centering
		\begin{minipage}{.4\textwidth}
			\centering
			\includegraphics[page=18]{butterfly}
			\caption{Case~2 in \cref{thm:smallDensity}.}
			\label{fig:C6:3}
		\end{minipage}\hfill
		\begin{minipage}{.4\textwidth}
			\centering
			\includegraphics[page=19]{butterfly}
			\caption{Case~2 in \cref{thm:smallDensity}.}
			\label{fig:C6:4}
		\end{minipage}
	\end{figure}
	
	It follows that~$uv$ does not actually leave~$f_1$ via~$cd$ and instead intersects~$ef$, for an illustration see  Figure~\ref{fig:C6:4}.
	This case can be handled very similarly to the previous one:
	since~$ab$ and~$ef$ are already intersected, it follows that~$v\in \lbrace c,d\rbrace$.
	We have~$v\neq d$, since~$d$ is enclosed by a closed curve~$C_3$ composed of parts of edges~$ab$ and~$ef$, which have already been intersected, and the edge~$de$, which is incident to~$d$.
	On the other hand, if~$v=c$, the edge~$uv$ has to intersect the edge~$de$ in its interior and, thereby, enters the region interior to~$C_3$.
	However, the edges bounding~$C_3$ have now all been intersected and, hence, it is no longer possible to reach~$c$.
	
	Overall, we obtain a contradiction, and it follows that we are actually in Case~3, to be considered next.
	
	\textbf{Case 3:} Both~$u$ and~$v$ belong to~$T$.
	Note that this implies that~$T+uv$ is a counterexample to Conways's conjecture.
		We obtain a contradiction, as it was established in the master's thesis by Pammer~\cite{pammer2014} that Conways's conjecture holds for~$n\le 12$.
	
	For the sake of self-containment, we include an independent direct proof:	
	The edge~$uv$ is a chord of~$T$.
	Consequently, the distance between~$u$ and~$v$ along~$T$ cannot be~$3$, as otherwise the thrackle
	$T+uv$ contains a thrackle of a cycle on four vertices as a subdrawing.
	It follows that the distance of~$u$ and~$v$ along~$T$ is~$2$.
	By symmetry, it suffices to consider two cases, namely~$\lbrace u,v\rbrace=\lbrace d,f\rbrace$ and~$\lbrace u,v\rbrace=\lbrace a,e\rbrace$, refer to Figures~\ref{fig:C6:5} and~\ref{fig:C6:6}, respectively.
	
	\begin{figure}[htb]
		\centering
		\begin{minipage}{.4\textwidth}
			\centering
			\includegraphics[page=20]{butterfly}
			\caption{$\lbrace u,v\rbrace=\lbrace d,f\rbrace$.}
			\label{fig:C6:5}
		\end{minipage}\hfill
		\begin{minipage}{.4\textwidth}
			\centering
			\includegraphics[page=21]{butterfly}
			\caption{$\lbrace u,v\rbrace=\lbrace a,e\rbrace$.}
			\label{fig:C6:6}
		\end{minipage}
	\end{figure}
	
	Consider the subdrawing~$\Gamma$ of~$T$ formed by the edges that are incident to~$u$ and~$v$.
	The edge~$uv$ has to be drawn in the unique face~$f_{uv}$ of~$\Gamma$ that is incident to both~$u$ and~$v$.
	If~$\lbrace u,v\rbrace=\lbrace d,f\rbrace$, the part of the edge~$bc$ that passes through~$f_{uv}$ is a chord of~$f_{uv}$ that does not separate~$u$ and~$v$.
	Hence, the edge~$uv$ cannot possibly intersect~$bc$ precisely once.
	Similarly, if~$\lbrace u,v\rbrace=\lbrace a,e\rbrace$, the part of the edge~$cd$ that passes through~$f_{uv}$ is a chord of~$f_{uv}$ that does not separate~$u$ and~$v$.
	Hence, the edge~$uv$ cannot possibly intersect~$cd$ precisely once.
	Hence, in both cases we obtain a contradiction to the assumption that~$T+uv$ is a thrackle.
	
	This concludes the final case.
	Altogether, we have shown that~$T'$ is indeed maximal, which proves the claim.
\end{proof}

\section{Topological thrackles without isolated vertices}
\label{sec:topological-family}
In this section, we investigate maximal thrackles without isolated vertices, such that the edge-vertex ratio is strictly smaller than 1.  An example of such a thrackle, depicted in \cref{fig:Kyncl}, was presented by Kyn\v{c}l~\cite{DBLP:journals/dcg/Kyncl13} in the context of good drawings, i.e., drawings in which every two edges intersect at most once.

\begin{figure}[htb]
	\centering
	\includegraphics[page=12]{butterfly}
	\caption{Kyn\v{c}l's example $K$.}
	\label{fig:Kyncl}
\end{figure}

\begin{restatable}{proposition}{KynclMax}\label{proposition:Kyncl}
	 Kyn\v{c}l's example $K$ is a maximal thrackle.
\end{restatable}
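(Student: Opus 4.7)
The plan is to check two things: first, that $K$ as drawn in \cref{fig:Kyncl} is a thrackle at all (every pair of edges crosses exactly once, counting a shared endpoint as the crossing), and second, that no chord between non-adjacent vertices of $K$ can be inserted while preserving this property. The first check is a finite combinatorial inspection of the figure: one lists the edges, groups them by their mutual position, and reads off from the picture that independent pairs cross once and adjacent pairs share only their common endpoint.

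For the maximality part I would recycle the face argument used in Case~3 of the proof of \cref{thm:smallDensity}. Fix a pair of non-adjacent vertices $u,v$ of $K$ and let $\Gamma_{uv}\subseteq K$ be the subdrawing consisting of all edges incident to $u$ or $v$. Any hypothetical new edge $uv$ making $K+uv$ a thrackle must meet each edge of $\Gamma_{uv}$ exactly at the shared vertex, and therefore must be drawn inside the unique face $f_{uv}$ of $\Gamma_{uv}$ whose boundary contains both $u$ and $v$. Now look at the arcs that the remaining edges of $K$ cut out of $f_{uv}$. To rule out the pair $\{u,v\}$, it suffices to exhibit either (i) a single edge $e$ of $K$ whose arc inside $f_{uv}$ is a chord with both $u$ and $v$ on the same side, so that $uv$ would have to cross $e$ an even number of times, contradicting once-crossing; or (ii) a Jordan-curve obstruction as in Case~1/2 of \cref{thm:smallDensity}: pieces of already-crossed edges together with boundary arcs of $f_{uv}$ form a simple closed curve that separates an initial segment of $uv$ from $v$ and whose further crossing would force a repeat crossing with some edge or a self-intersection of $uv$.

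Concretely, I would go through all non-adjacent pairs $\{u,v\}$ of $K$, reducing drastically by using the symmetries visible in \cref{fig:Kyncl} (which group the pairs into a small number of orbits), and for each representative pair produce the obstructing edge or Jordan curve as above. A convenient table indicating, for each orbit, the vertices $u$ and $v$, the face $f_{uv}$, the witness edge $e$ (or the closed curve), and the side on which $u$ and $v$ both lie, would make the verification transparent.

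The main obstacle is simply the bookkeeping: compared to the single six-cycle of \cref{thm:smallDensity}, Kynčl's example has many more vertices, many more non-adjacent pairs, and larger faces $f_{uv}$, so some pairs will admit no single chord-type witness and will require the closed-curve argument. Managing this without losing a case is the real work; the conceptual content is already contained in the arguments of \cref{thm:smallDensity}, and the symmetries of $K$ should keep the number of genuinely distinct sub-cases small.
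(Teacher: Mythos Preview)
Your general strategy is sound and would succeed, but you have misjudged the size of the instance. Kyn\v{c}l's example~$K$ has exactly six vertices and only \emph{four} edges (two incident to~$a$, two incident to~$x$); it is not larger than the $6$-cycle of \cref{thm:smallDensity} but strictly sparser. Your closing paragraph, anticipating ``many more vertices, many more non-adjacent pairs, and larger faces~$f_{uv}$'', is therefore based on a wrong picture of~$K$, and the heavy machinery you propose is not needed.

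The paper's proof exploits precisely this sparsity. Since every edge of~$K$ is incident to~$a$ or to~$x$, a hypothetical new edge~$uv$ may only cross edges that avoid both~$u$ and~$v$; for the pair~$\{a,x\}$ this set is empty, and since~$a$ and~$x$ lie in different faces of~$K$ the edge~$ax$ is immediately impossible. For a pair such as~$\{a,y\}$ the only crossable edge is~$xz$, and one checks in one step that crossing~$xz$ from~$y$ lands in a face not containing~$a$. The remaining pairs fall out by the evident symmetries of~$K$. So the whole argument fits in a few lines: no table, no Jordan-curve bookkeeping, no orbit enumeration beyond ``by symmetry''.

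Your $\Gamma_{uv}$/$f_{uv}$ framework is not wrong, and for~$\{a,x\}$ it coincides with the paper's argument (here~$\Gamma_{uv}=K$ and there is no face incident to both). But for the other pairs it is an unnecessarily indirect route: rather than isolating a subdrawing and then searching for a chord or closed-curve witness inside~$f_{uv}$, you can simply list the (at most two) edges that~$uv$ is allowed to cross and trace one step.
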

\begin{lemProof}
	We label the vertices of $K$ as depicted in \cref{fig:Kyncl}.
	$K$ is a thrackle, because the two edges incident to vertex $a$ cross the two edges incident to vertex $x$ exactly once; and the remaining pairs of edges are incident to a common vertex, namely $a$ or $x$. Now, we argue that $K$ is maximal. Suppose for a contradiction that there exists an edge $e$ such that $K\cup e$ is a thrackle.
	As Kyn\v{c}l already pointed out, an edge from $a$ to $x$ must introduce new crossings, since they don't share a common face. However, every edge of $K$ is incident to either $a$ or $x$, thus no edge is allowed to be crossed. Consequently, there is no way to draw an edge $e=ax$ such that the drawing remains a thrackle. An edge from $a$ to $y$ has to intersect exactly the edge~$xz$. In particular, an edge starting from $y$ and leaving the face via $xz$ ends in a face that is not incident to $a$; a contradiction. By symmetry of $K$, it is impossible to insert any of the other edges.
\end{lemProof}

Note that the edge-vertex ratio of Kyn\v{c}l's example is $\frac{4}{6}=\frac{2}{3}$. To date, we know of no maximal thrackle without isolated vertices that has a lower edge-vertex ratio, with the exception of $K_{1,1}$. In the following we present an infinite family of thrackles with a low edge-vertex ratio.

\fourOverFive*

\begin{proof}
We first give a high-level overview of the proof strategy.
We start our construction with a star-shaped thrackle~$T$ of the cycle $C_{2n+1}$, for some $n\geq 2$, as depicted in~\cref{fig:star} for $n=4$.
In the first step, we duplicate every vertex and edge of $T$ in  a certain way. 
This results in a thrackle drawing $T_1$ of the cycle $C_{4n+2}$. 
Then we apply another vertex/edge duplication step that consists of adding Kyn\v{c}l's example to each edge, by this obtaining a thrackle $T_2$. 
Finally, we show that if $T_2$ wasn't maximal, we could reroute an additional edge in $T_2$ to start from vertices of~$T_1$. 
Therefore, the maximality of $T_1$ implies the maximality of $T_2$. 

\begin{figure}[hb]
	\centering
	\begin{minipage}[t]{.47\textwidth}
		\centering
		\includegraphics[page=7]{butterfly}
		\captionsetup{width=\textwidth}
		\caption{Star-shaped thrackle~$T$.}
		\label{fig:star}
	\end{minipage}\hfill
	\begin{minipage}[t]{.47\textwidth}
		\centering
		\includegraphics[page=8]{butterfly}
		\captionsetup{width=\textwidth}
		\caption{The blown up star-shaped thrackle $T_1$: An external edge is highlighted in red and an internal edge in green. Moreover,  a  triangular region $T_u$ at vertex $u$ is illustrated.}
		\label{fig:starBelt}
	\end{minipage}
\end{figure}

Now, we  define $T_1$ precisely. To this end, we choose an orientation of $C_{2n+1}$ and consider three consecutive vertices $u$,$v$, and $w$ of $C_{2n+1}$.
We replace every vertex $v$ of $T$ by two vertices $v_1$ and $v_2$ very close to $v$ and on opposite sides of the cycle~$C_{2n+1}$, 
such that the \emph{first copies} $u_1$ and $v_1$ lie in the outer face of $T$ while the \emph{second copies} $u_2$ and $v_2$ lie in interior faces of $T$.
Every directed edge $uv$ of $T$ is replaced by the edges $u_1v_2$ and $u_2v_1$, which are routed in a thin \emph{tunnel} around  
$uv$ in the following way,
see~\cref{fig:Belt} for an illustration:
The edge starting at $u_1$ goes along~$uv$ 
without crossing it, surrounds~$v_1$, and then crosses the edge~$vw$ of $T$ to connect to~$v_2$.
Analogously, the edge starting at $u_2$ goes along~$uv$, surrounds $v_2$, and then crosses the edge~$vw$ of $T$ as well as $u_1v_2$ to connect to $v_1$.
The edges emanating from $v_1$ and $v_2$ are drawn analogously and hence intersect the edges~$u_1v_2$ and $u_2v_1$, respectively. 
\begin{figure}[tb]
	\centering
	\includegraphics[page=3]{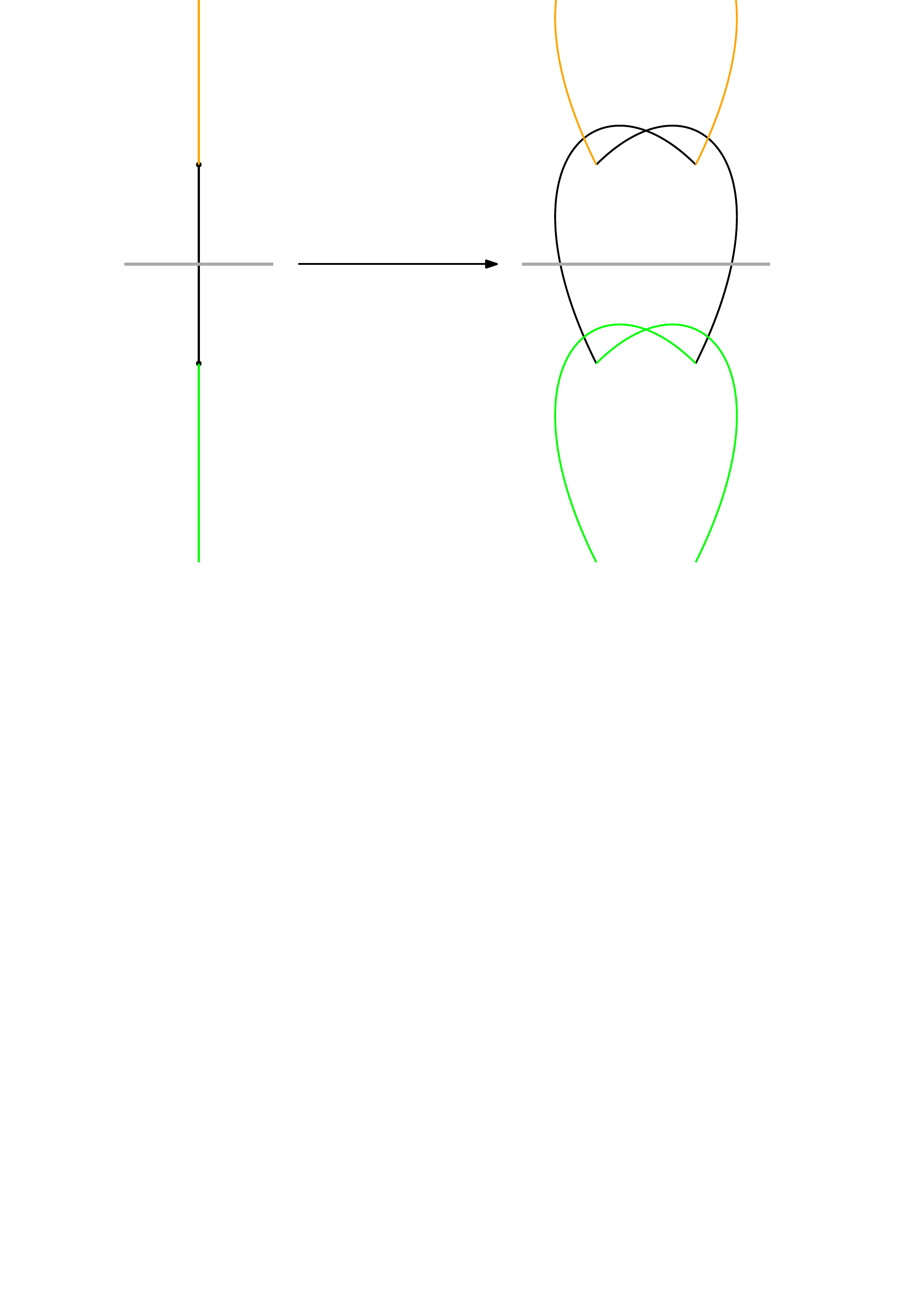} 
	\caption{Step 1: Duplicating the vertices and edges. The tunnel of $uv$ is depicted grey-shaded. For convenience we slightly bend the edges of $T$ before duplicating. In the right figure, the original cycle is indicated with dotted lines.}
	\label{fig:Belt}
\end{figure}
We will denote edges like $u_1v_2$, going from a first to a second copy of vertices in $T$ as \emph{external} edges, since these edges are the ones that are incident to the outer face of the drawing. The other edges of $T_1$, edges like $u_2v_1$, going from a second copy of a vertex in $T$ to a first copy, will be called \emph{internal} edges. 

The result $T_1$ is a drawing of the cycle $C_{4n+2}$; a drawing for $n=4$ is depicted in \cref{fig:starBelt}. 
We next show that every pair of edges $e$,$e'$ in $T_1$ intersects and hence~$T_1$ is a thrackle. 
	\begin{restatable}{lemma}{tOne}
		$T_1$ is a thrackle. 
	\end{restatable}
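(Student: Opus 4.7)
The plan is to partition the pairs of edges of $T_1$ according to the relationship between their guiding $T$-edges and count the intersections in each case. Because every edge of $T_1$ stays in a thin tunnel around its guiding $T$-edge (save for a small detour near one endpoint) and $T$ itself is a thrackle, two tunnels are disjoint outside the region around either a vertex they share in $T$ or the unique crossing of their two guiding edges. Hence only three cases of interaction can occur.

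In Case 1, both edges lie in the same tunnel around some $uv$. They are $u_1 v_2$ and $u_2 v_1$, and the construction explicitly forces $u_2 v_1$ to cross $u_1 v_2$ exactly once near $v$, when the internal edge surrounds $v_2$ before reaching $v_1$. In Case 2, the edges lie in different tunnels of two $T$-edges sharing a vertex, say $uv$ and $vw$; the four cross-tunnel pairs split into two pairs sharing an endpoint (at $v_1$ or $v_2$) and two pairs that do not, namely the two external edges $\{u_1 v_2, v_1 w_2\}$ and the two internal edges $\{u_2 v_1, v_2 w_1\}$. Each of the latter two pairs is forced to cross exactly once near $v$ by the clause in the construction that the edges emanating from $v_1$ and $v_2$ intersect $u_1 v_2$ and $u_2 v_1$, respectively; outside a neighborhood of $v$ the tunnels of $uv$ and $vw$ are disjoint. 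In Case 3, the guiding edges $\hat e$ and $\hat e'$ cross exactly once in $T$, and near that crossing the two tunnels meet transversely, so each of the two edges in one tunnel crosses each of the two in the other tunnel exactly once, yielding the required $2 \times 2 = 4$ crossings, while the tunnels are disjoint elsewhere.

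The main obstacle is the local bookkeeping in Case 2, where the two tunnels overlap nontrivially around the shared vertex $v$. One must verify that the \emph{surround-and-cross} detours of the internal and external edges produce exactly one intersection per non-adjacent cross-tunnel pair, and no extraneous self-intersections or double crossings are introduced. This reduces to a careful inspection of the local picture at $v$, which is fully determined by the explicit routing prescribed in the construction and illustrated in the accompanying figure; together with Cases 1 and 3 this confirms that every pair of edges of $T_1$ meets exactly once, so $T_1$ is a thrackle.
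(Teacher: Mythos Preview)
Your proof is correct and follows essentially the same approach as the paper: both partition pairs of edges of $T_1$ by whether their guiding $T$-edges coincide, share a vertex, or are disjoint, and resolve each case by appealing to the explicit routing in the tunnels (the construction-forced crossing in Case~1, the surround-and-cross detour near the shared vertex in Case~2, and the transversal tunnel crossing in Case~3). Your write-up is somewhat more explicit in Case~2 about which of the four cross-tunnel pairs are adjacent and which are not, but the underlying argument is the same.
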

\begin{lemProof}
	Denote with $e_o$ and $e'_o$ the edges in $T$ from which $e$ and $e'$, respectively, originated.
	We distinguish the following cases: 
	
	\textbf{Case 1:} $e_o = e'_o$.
	This is directly handled by construction, since the two new edges replacing one edge cross.
	
	\textbf{Case 2:} $e_o$ and $e'_o$ share exactly one vertex $v$. 
	If $e$ and $e'$ share a vertex, then they intersect at that vertex and by construction, they have no further intersection. 
	If $e$ and $e'$ share no vertex, assume without loss of generality that $e_o$ is before $e'_o$ along $C_{2n+1}$. 
	Then $e$ intersects $e'$ in the tunnel of $e_o$ when it goes around the non-incident copy of~$v$.
	
	\textbf{Case 3:} $e_o$ and $e'_o$ share no vertex. 
	In this case, the tunnels of $e_o$ and $e'_o$ cross. Hence $e$ and $e'$ cross as well.
	
	Altogether, this shows that $T_1$ is a thrackle. 
\end{lemProof}
Moreover, we claim that it is maximal.
\begin{restatable}{proposition}{propTone}\label{proposition:max1}
	The  thrackle $T_1$ of $C_{4n+2}$ is maximal.
\end{restatable}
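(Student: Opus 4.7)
The plan is to assume, toward a contradiction, that some edge $e = pq$ can be added so that $T_1 + e$ remains a thrackle, and then rule out every choice of endpoints $p, q$ by a case analysis that exploits symmetry together with Jordan-curve arguments in the spirit of the proof of \cref{thm:smallDensity}.

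First I would use the $(2n+1)$-fold rotational symmetry of $T_1$ inherited from the star-shaped thrackle $T$ to fix the cyclic position of one endpoint, say $p$; this leaves only the choice of $p \in \{u_1, u_2\}$ for a fixed vertex $u$ of $T$ and, up to this symmetry, $O(n)$ positions for $q$. I would then further classify each candidate edge by the type (first copy or second copy) of $q$ and the cyclic distance between $u$ and the underlying vertex of $q$ in the original cycle $C_{2n+1}$. The additional involution that swaps first and second copies of all vertices cuts the work roughly in half and lets us treat external and internal edges on the same footing.

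Second, for each representative case I would exploit the tunnel structure of $T_1$. Each edge of the star $T$ is replaced in $T_1$ by a pair of edges (one external, one internal) drawn in a thin tunnel, and these tunnels together with the central region of the star partition the plane into simply connected faces. Since $e$ must cross every edge of $T_1$ not incident to $p$ or $q$ exactly once, we can track which tunnels $e$ enters and which of the two edges in each tunnel it must cross, and in what order. In each case I would construct a closed curve $C$ from a portion of $e$ together with short arcs along one or two edges of $T_1$ that $e$ has already crossed, and then argue that $C$ separates some vertex or required edge from its counterpart, forcing either a self-crossing of $e$ or a second crossing with a bounding edge, exactly as in the proof of \cref{thm:smallDensity}.

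The main obstacle I foresee is the mixed case where $p$ is a first copy and $q$ is a second copy at intermediate cyclic distance. In such a configuration $e$ must pass transversally from the outer face of $T$ into an inner face and possibly back, so the simple enclosed-face arguments used for the extreme distances do not immediately apply. I would attack these subcases with a parity argument: the external and internal edges of $T_1$ form two nearly parallel copies of the edge set of the star, and the parity of crossings of $e$ with each copy is forced both by the thrackle condition (each edge must be crossed exactly once) and by the sides of $e$'s endpoints relative to the star. I expect to show that the two required parities are simultaneously unattainable for every remaining candidate, completing the contradiction and establishing the maximality of $T_1$.
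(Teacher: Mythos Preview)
Your overall strategy---case analysis on the endpoints plus Jordan-curve/region arguments---matches the paper's, but two specific parts of your plan would not go through.

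First, the claimed involution swapping first and second copies is not a symmetry of $T_1$: by construction the first copies lie in the outer face of the underlying star $T$ while the second copies lie in interior faces, so the drawing treats the two types of vertices genuinely differently. The paper's proof exploits precisely this asymmetry. After handling the nearby cases ($s_o=t_o$ and $s_o,t_o$ adjacent in $T$), its decisive split is whether \emph{at least one} endpoint is a first copy (the paper's Case~3) versus \emph{both} endpoints being second copies (Case~4), and these require quite different arguments.

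Second, you single out the ``mixed'' case (one first copy, one second copy at intermediate distance) as the main obstacle and propose a parity argument for it. In the paper this mixed case is absorbed into Case~3 and is dispatched by a direct region argument localized around the first-copy endpoint---no global parity is needed, and the argument is uniform in the cyclic distance. The genuinely hard case is Case~4, where both endpoints are second copies. There the paper does not use parity at all; it introduces lower/middle/upper parts of edges (relative to the incident edges of $s$ and $t$) and runs an inductive walk along the cycle, showing step by step that the new edge must meet successive edges in their upper parts, until this forces an intersection with an edge incident to $t$. Your parity sketch does not produce this kind of propagated constraint, and it is not clear how it would cope with the dependence on the (odd) directed distance from $s_o$ to $t_o$.
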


\begin{lemProof}
	To show maximality, assume for a contradiction that it is possible to insert an edge $e=st$ in $T_1$ that intersects each edge of $T_1$ exactly once.  As before, we consider $C_{4n+2}$ with an orientation.
	Let $s_o$ and $t_o$ denote the vertices in $T$ from which~$s$ and~$t$ originated, respectively. We distinguish four cases depending on the relation of $s_o$ and~$t_o$:
	
	In \textbf{Case 1},  it holds that  $s_o=t_o$, that is, $\{s, t\} = \{w_1,w_2\}$ are the two duplicates of a vertex $w:=s_o$ in~$T$. 
	Consider the three vertices $u$, $v$, and $w$ in $T$ that are consecutive in this order as well as their duplicates in $T_1$.
	The edge $e=w_1w_2$ must not cross any edge incident to $w_1$ or $w_2$ and it has to cross all other edges exactly once. 
	Hence, $e$ has to start and end in the unique face~$C$ incident to both $w_1$ and $w_2$, 
	as otherwise it is locked in a triangular face of which it must not cross any boundary edge; see~\cref{fig:maxTrackleCaseOne}.
	On the other hand, consider an edge $a$ that is not incident to any $w_i$ nor $v_i$;  $a$ crosses the edges $v_1w_2$ and $v_2w_1$ consecutively in interior points.  Note that $e$ crosses the boundary of the region~$R$  defined by parts of $a$, $v_1w_2$ and $v_2w_1$ as illustrated in ~\cref{fig:maxTrackleCaseOne},  an even number of times since it contains $C$. However, $e$ may only intersect the boundary at $a$; a contradiction.

	\begin{figure}[htb]
		\centering
		\begin{minipage}{.45\textwidth}
			\centering
			\includegraphics[page=5]{belt1}
			\captionsetup{width=.95\textwidth}
			\caption{Illustration of Case 1 of \cref{proposition:max1}.}
			\label{fig:maxTrackleCaseOne}
		\end{minipage}\hfil
		\begin{minipage}{.45\textwidth}
			\centering
			\includegraphics[page=6]{belt1}
			\captionsetup{width=.95\textwidth}
			\caption{Illustration of Case 2 of \cref{proposition:max1}.}
			\label{fig:maxTrackleCaseTwo}
		\end{minipage}
	\end{figure}
	
	In \textbf{Case 2}, the vertices $s_o$ and $t_o$ share an edge in $T$, that is, $\{s, t\} = \{v_i,w_i\}$ for some $i\in \{1,2\}$ and some directed edge $vw$ of $T$. Let $j$ fulfill $\{i,j\}=\{1,2\}$.
	Let $u$ and $x$ be the vertices preceeding~$v$ and succeeding~$w$, respectively, in $T$ and consider the duplicates of $u$, $v$, $w$, and $x$ in~$T_1$.
	Note that $w_i$ lies in the interior of a closed region $R$ bounded by parts of the edges $v_1w_2$, $v_2w_1$ and $v_iu_j$; see~\cref{fig:maxTrackleCaseTwo} for an illustration of $i=1$.
	As all those edges are incident to one of the endpoints of $e$, $e$ is completely contained in $R$.
	Moreover, $e$ must cross the edge $w_jx_i$ and hence enter the triangular face incident to $w_j$. 
	However, $w_jx_i$ is the only edge of this face that $e$ is allowed to cross, a contradiction to the existence of $e$.
	
	In Case 3 and Case 4, it holds that $s_o$ and $t_o$ have distance at least two in~$T$. Therefore, the two edges of $s$ cross the two  edges of $t$ in interior points.
	For every oriented edge of $T_1$, we define its \emph{lower part} as the section from its start vertex to the first (interior) intersection point with any edge incident to $s$ or $t$;  note that for an edge incident to $s$ the lower part is the section from its start vertex to the first (interior) intersection point with any edge incident to $t$. Likewise, we define the \emph{upper part} as the section from  the last (interior) intersection point with any edge incident to $s$ or $t$ to its end vertex.  The section of an edge between the lower and upper part  is called its \emph{middle part.}

	In \textbf{Case 3}, one of $s$ or $t$ is a first copy of some vertex $v$ in $T$. We assume that $s=v_1$ and we denote the second copy of $v$ by $v_2$. 
	Let $u$ and $w$ be the vertices preceeding and succeeding $v$, respectively, in $T$ and consider their duplicates in~$T_1$.
	Note that $s=v_1$ is contained in the region $R_1$ bounded by the upper parts of the edges $u_2v_1$, $a_1:=u_1v_2$ and one of the edges incident to $t$, say $e_t$, as illustrated in \cref{fig:maxTrackleThree}.  
	\begin{figure}[htb]
		\centering
		\includegraphics[page=7]{belt1}
		\caption{Illustration of Case 3 of \cref{proposition:max1}.}
		\label{fig:maxTrackleThree}
	\end{figure}
Since the edges  incident to $s=v_1$ and $t$  cannot be intersected again, the edge $e$ leaves the region $R_1$ via $a_1$. Now consider the edge $a_2:=v_2w_1$ and the region $R_2$ formed by a tiny section of $u_2v_1$, the edge $v_1w_2$ and the section of $a_2$ not contained in $R_1$. We show that it is impossible for $e$ to intersect $a_2$ exactly once.  First note that $a_2$ is covered by $R_1$ and $R_2$. 
When $e$ intersects $a_2$ within  $R_1$, it must intersect $a_1$ at least twice; a contradiction. Moreover, $R_2$ is not incident to $t$ and its boundary can only be crossed via $a_2$, which implies $e$ has to enter it directly at $s$. Then $e$ is locked in the (blue-green) region $R_3$ bounded by parts of its incident edges $v_1w_2$ and $u_2v_1$, as well as $e_t$ though, that does not contain $t$;  a contradiction. This finishes the proof of this case.
	
	In \textbf{Case 4}, both vertices $s$ and $t$ are a second copy of vertices $u$ and $v$ of~$T$, respectively. Therefore, we set $s:=u_2$ and $t:=v_2$. We may assume that the distance from $s_o$ to $t_o$ is odd in (the directed) $T$; otherwise we exchange the labels of $s$ and $t$ .
	Let~$u^1$ be the successor of $u$ in $T$, and  $u_1^1$, $u_2^1$ the two copies of  $u^1$ in $T_1$. Likewise, we denote the $i$th successor of $u$ in $T$ by $u^i$ and its $T_1$ copies $u_1^i,u_2^i$. We show that $e$ intersects the edge $a:=u_1^1u_2^2$ in its upper part: Clearly, $e$ intersects  the boundary of every region, that does neither contain $s$ nor $t$, an even number of times.  Consider the region $R$ formed by the lower and middle part of $a$ and middle and upper part of $su_1^1=u_2u_1^1$, together with a section of the incoming edge of $t=v_2$. For an illustration consider \cref{fig:maxTrackleCaseFour}.  
	\begin{figure}[htb]
		\centering
		\includegraphics[page=9]{belt1}
		\caption{Illustration of Case 4 of \cref{proposition:max1}.}
		\label{fig:maxTrackleCaseFour}
	\end{figure}
	Since $R$ is not incident to $s$ nor to $t$, $e$ crosses the boundary of $R$ an even number of times. Since $e$ may cross the boundary of $R$ only via the section of~$a$; it does not intersect  $R$. Hence, $e$ intersects $a$ in its upper part.

	 The just considered edge $a$ is incident to $u_2^2$. Note that for every other edge $b$ incident  to~$u_1^2$ or~$u_2^2$, the parts  of $a$ and  $b$ incident to $u_1^2$ or~$u_2^2$, and sections of edges incident to $s$ or $t$ form regions~$R_b$ that are incident to exactly the vertices ~$u_1^2$ and~$u_2^2$. Clearly, the edge~$e$ may intersect the boundary of $R$ only at~$a$ or $b$. Consequently,  since $e$ intersects~$a$ in its upper part, $b$ is also intersected in the part incident to $u_1^2$ or $u_2^2$. Thus, for~$u^2_1$ and~$u^2_2$, all incident edges are intersected in their parts containing these vertices, upper or lower part depending on the orientation of $b$.
	
	 Recall that the distance from $s_o$ to $t_o$ is odd in $T$. Therefore, for some $k$ it holds that  $u^{2k+1}_2=t$.
	Considering~$u^{2i-1}$ and~$u^{2i}$ (for all $i=1,\dots, k$) instead of $u^1$ and $u^2$, it follows that $e$ intersects the edges incident to $u_1^{2i}$ and $u_2^{2i}$ in the parts incident to these vertices.
	This implies that the edge~$u^{2k}_1u^{2k+1}_2$ is intersected in its lower part and incident to $u^{2k+1}_2=t$; a contradiction. 
\end{lemProof}

\subsection*{Kyn\v{c}l belt construction}

For the next step, we introduce the  \emph{Kyn\v{c}l belt construction}, which
is applied to~$T_1$ in order to obtain a drawing~$T_2$.
We will show that~$T_2$ is a maximal thrackle with  edge-vertex-ratio of $\frac{5}{6}$.

The Kyn\v{c}l belt construction creates a copy $K_e$ of Kyn\v{c}l's example for each edge~$e$ of~$T_1$.
The edges of~$T_1$ are preserved and the Kyn\v{c}l copy~$K_e$ created for an edge~$e$ of~$T_1$ is drawn very close to~$e$ and interlaced with~$e$ and its incident edges, in order to ensure that the edges of~$K_e$ intersect with all edges of~$T_1$ (and~$T_2$). For an illustration consider \cref{fig:belt2-newAppen}.

\begin{figure}[htb]
	\centering
	\includegraphics[page=10]{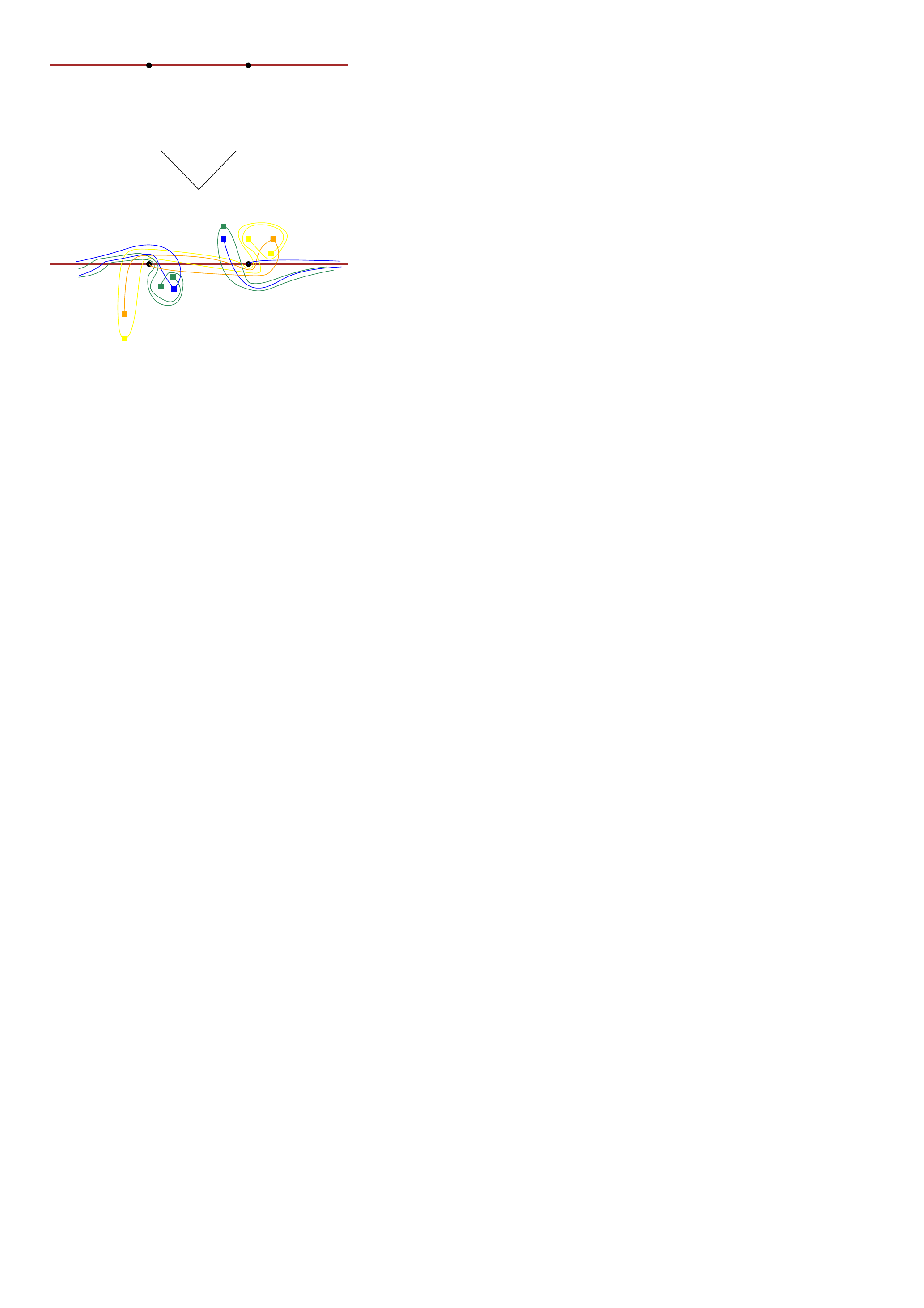}
	\caption{ Kyn\v{c}l belt construction, the original edges (thick) are preserved.}
	\label{fig:belt2-newAppen}
\end{figure}

Precisely, the construction works as follows:
for each vertex~$v$ of~$T_1$ there exists a small disk~$D_v$ containing~$v$ such that the intersection of~$D_v$ with~$T_1$ is a simple curve consisting of parts of the two edges incident to~$v$.
In particular, the disk~$D_v$ is disjoint from all edges that are not incident to $v$.
We refer to~$D_v$ as the \emph{vicinity} of~$v$.
We choose  the vertex vicinities small enough such that they are pairwise disjoint.

As in the previous step, we consider the edges of~$T_1$ to be directed. 
However, in contrast to the previous step, the construction is not symmetric and it will be useful to fix a specific orientation of the cycle; we use the orientation inherited from~$T_1$ as in \cref{fig:Belt}. It follows that external edges are oriented from a first copy of a vertex in $T$ to a second copy, whereas internal edges are oriented from a second copy to a first copy.

Consider an internal edge~$e=uv$ of~$T_1$
and let~$f$ and~$g$ denote the edges that precede and succeed~$e$ along~$T_1$, respectively.
The vertices of the Kyn\v{c}l copy~$K_e$
created for~$e$ are denoted by~$a_e,b_e,c_e$ and~$x_e,y_e,z_e$, where~$i_e$ corresponds to its pendant~$i\in \lbrace a,b,c,x,y,z\rbrace$ of Kyn\v{c}l's example illustrated in Figure~\ref{fig:Kyncl}.
The vertices~$a_e,y_e$, and~$z_e$ are placed in~$D_v$, to the left side of the directed path~$eg$, while the vertices~$c_e$ and~$x_e$ are placed in~$D_u$, to the right side of the directed path~$fe$.
Finally, the vertex~$b_e$ is identified with~$u$.

All intersections between the edges of~$K_e$ are placed inside~$D_v$ as illustrated in Figure~\ref{fig:belt2-newAppen}.
All edges of~$K_e$ cross~$g$ in~$D_v$ and then follow the edge~$e$ closely in order to reach~$D_u$.
In particular, we draw the edges close enough to~$e$ such that they are disjoint from all vertex vicinities except for~$D_v$ and~$D_u$.
Note that in this way, the edges pass through all edges of~$\mathrm{E}(T_1)\setminus \lbrace f,e,g\rbrace$.
Finally, inside~$D_u$, the edges of~$K_e$ that are non-incident to~$b_e$ cross~$e$ and then~$f$.

This construction is repeated for all internal edges and hence for every second edge of $T_1$;
recall that~$T_1$ is a cycle of even length.
For the external edges of~$T_1$, we proceed analogously, except for the fact that we use a reflected version of Kyn\v{c}l's example and we exchange the roles of the two sides of the directed paths~$eg$ and~$fe$ inside the disks~$D_u$ and~$D_v$, as illustrated in Figure~\ref{fig:belt2-newAppen}. 
This ensures three facts: 

Firstly, each edge~$e'$ of~$K_e$ crosses each edge of~$K_f$ (and~$K_g$) precisely once.
Additionally, the edges of the remaining Kyn\v{c}l copies are intersected by the part of~$e'$ that is disjoint from~$D_u$ and~$D_v$.
This shows that~$T_2$ is indeed a thrackle.

Secondly, every new vertex lies in some vicinity $D_u$ and on a specific side of the cycle $T_1$, namely inside the small triangular region $T_u$ next to vertex $u$ in $T_1$ (see \cref{fig:starBelt}). This is ensured by the fact that vertices are added to the right of the directed cycle at second copies of vertices in $T_1$ (such as $u$ in \cref{fig:belt2-newAppen}) and to the left of first copies (such as $v$ in \cref{fig:belt2-newAppen}).

Thirdly, for each edge of $T_1$, we have added four new edges and five new vertices. Consequently, the  edge-vertex-ratio is $\frac{5}{6}$. \smallskip

It remains to prove that $T_2$ is a maximal thrackle. Therefore, we assume by contradiction that there exists a new edge $s$ that can be introduced into $T_2$ such that $T_2\cup s$ is a thrackle. To arrive at a contradiction, we show a handful of properties of $s$.

To do so, we will refer to $B_e:=E(K_e)\cup \lbrace e\rbrace$ as the \emph{edge bundle of $e$}; the name is inspired by the fact that these edges run in parallel close to each other, when outside of $D_u$ or $D_v$. The \emph{region} $R_e$ of this bundle is the region of $T_2 \setminus(D_u\cup D_v)$ that is enclosed by its outer edges $e$ and $a_eb_e$, see \cref{fig:belt2-newAppen}.

\begin{lemma}\label{lem:partialbundleN}
	Let~$e=uv$ be an edge of~$T_1$.
	Suppose that the new edge~$s$ intersects at least one, but not all of the edges of
	$B_e$
	in $R_e$.
	Then we may locally reroute~$s\cap R_e$ such that each intersection of~$s$ with an edge of~$B_e$ lies inside~$D_u$ or~$D_v$.
\end{lemma}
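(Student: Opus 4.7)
The plan is to exploit two structural facts about $R_e$: first, that $R_e$ is topologically a disk bounded by the two outer bundle edges $e$, $a_eb_e$ and the boundary arcs $A_u:=\partial D_u\cap \partial R_e$, $A_v:=\partial D_v\cap \partial R_e$; and second, that by the construction of the Kyn\v{c}l belt, the only edges of $T_2$ meeting the interior of $R_e$ are the three interior bundle edges of $K_e$, which appear as three disjoint parallel chords from $A_u$ to $A_v$ (no other Kyn\v{c}l copy reaches $R_e$, and the $T_1$-edges $f,g$ only touch $R_e$ on $\partial D_u,\partial D_v$). Thus $R_e$ looks combinatorially like a rectangular strip with $5$ parallel ``horizontal'' bundle edges, two of which form part of the boundary.

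First I would classify the maximal arcs of $s\cap R_e$ by the location of their endpoints on $\partial R_e=e\cup a_eb_e\cup A_u\cup A_v$. Since $s$ crosses each edge of $B_e$ at most once (thrackle condition on $T_2\cup s$), a simple parity argument rules out an arc having both endpoints on $e$ or both on $a_eb_e$. Moreover, an arc with one endpoint on $e$ and one on $a_eb_e$ would have to cross all three interior chords (by another parity argument), and therefore all five edges of $B_e$; this is ruled out by the hypothesis. Hence every maximal arc has its endpoints on $A_u\cup A_v$, with at most one endpoint possibly lying on an outer bundle edge.

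Next I would handle each such arc by a local deformation. For an arc $\gamma$ with both endpoints on $A_u$ (and symmetrically on $A_v$), the bundle edges crossed by $\gamma$ are exactly those separating the two endpoints along $A_u$; I would replace $\gamma$ by an arc $\gamma'$ lying inside $D_u$ and running in a thin neighborhood of $A_u$, which by construction crosses exactly the same bundle edges of $K_e$ (now near their incidences with $\partial D_u$) and nothing else, since the half-edges of $f$ and of $B_f$ enter $D_u$ on the opposite side of the vertex vicinity. For an arc $\gamma$ with one endpoint on $A_u$ at some position $y_1$ and one on $A_v$ at some position $y_2$, let $k$ be the number of interior chords separating $y_1$ and $y_2$; since not all five bundle edges are crossed, $y_1$ and $y_2$ need not lie in corresponding strips, and I would extend the reroute into $D_u$ by shifting the exit-position on $A_u$ from $y_1$ to a new position $y_1'$ lying in the same strip as $y_2$. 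Inside $D_u$, this small detour picks up exactly the $k$ crossings with bundle edges that previously occurred in $R_e$, and the new piece of $s$ in $R_e$ is then drawn as a simple curve staying within a single strip and hence crossing no bundle edge. Arcs with one endpoint on an outer bundle edge are handled analogously by first pushing that endpoint a tiny distance across the outer edge into the adjacent disk and then applying the previous cases.

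The main obstacle I anticipate is verifying that each of these local reroutes inside $D_u$ or $D_v$ does not accidentally introduce new crossings with the other edges present in those disks (namely $f, g$ and the Kyn\v{c}l bundles $B_f, B_g$). This is controlled by the fact that each rerouting can be performed in an arbitrarily thin neighborhood of $A_u$ or $A_v$, a region of $D_u$ (respectively $D_v$) that is disjoint from the half-edges of $f, K_f$ (respectively $g, K_g$) by the choice of the vertex vicinity and the layout prescribed in the Kyn\v{c}l belt construction.
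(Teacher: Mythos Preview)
Your central structural claim---that ``the only edges of $T_2$ meeting the interior of $R_e$ are the three interior bundle edges of $K_e$''---is false, and this undermines the argument as written. Recall that $T_1$ is a thrackle, so every edge $e'$ of $T_1$ not adjacent to $e$ crosses $e$ outside all vertex vicinities; since the bundle $B_{e'}$ is drawn in a thin tube around $e'$, the entire bundle $B_{e'}$ crosses the strip $R_e$ transversally. Thus $R_e$ is threaded by many ``vertical'' edges in addition to the five ``horizontal'' bundle edges, and your picture of $R_e$ as a clean rectangle with only five parallel chords is wrong. Your reroute of an $A_u$--$A_v$ arc into ``a simple curve staying within a single strip and hence crossing no bundle edge'' therefore does not cross \emph{nothing}: it must still cross every transversal bundle exactly once, and you have to argue that the rerouted arc does so (and that the detour in $D_u$ crosses none of them). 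Likewise, your case of an arc with both endpoints on $A_u$ requires a parity argument to see that such an arc crosses each transversal edge an even number of times, hence zero times; you have not supplied this. Your handling of arcs with an endpoint on an outer bundle edge (``push that endpoint a tiny distance across the outer edge into the adjacent disk'') is also unclear, since crossing $e$ or $a_eb_e$ exits $R_e$ into the exterior of the bundle, not into any disk.

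The paper takes a different and more robust route: it moves crossings one at a time. Given a maximal piece $s'$ of $s\cap R_e$ containing a bundle crossing, one endpoint $p$ of $s'$ lies on $\partial D_u\cup\partial D_v$ (else $s'$ crosses all of $B_e$, contrary to hypothesis). Walking from $p$ along $s'$ to the first bundle crossing $p'$ with some $e'\in B_e$, the paper deletes the sub-arc $pp'$ and instead, from $p'$, follows $e'$ closely back into the disk and crosses $e'$ there. The key virtue of ``follow $e'$ closely'' is that it automatically preserves the intersection pattern with every edge of $T_2\setminus B_e$: whatever transversal edges the discarded piece met, the rerouted piece meets in exactly the same way, because it runs parallel to a bundle edge. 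No separate accounting for transversal bundles is needed, and self-intersection is excluded because $s$ met $e'$ only once. Your global arc-by-arc decomposition could be made to work with the missing transversal-edge bookkeeping added, but the paper's local edge-following trick is what makes the argument short.
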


\begin{lemProof}
	We iteratively move the intersections of~$s\cap B_e$ from~$R_e$ towards~$D_u\cup D_v$.
	
	To move one crossing, consider a maximal connected subset~$s'$ of~$s\cap R_e$ that contains a crossing of~$s$ with an edge of~$B_e$.
	At least one endpoint~$p$ of~$s'$ belongs to~$\partial D_v\cup \partial D_u$ since otherwise~$s'$ intersects all edges of~$B_e$ inside~$R_e$, which contradicts the preconditions of the Lemma.
	Assume that~$p\in \partial D_v$, the case that~$p\in \partial D_u$ can be handled analogously.
	Let~$p'$ be the first crossing of~$s$ with an edge~$e'\in B_e$ when traversing~$s'$ from~$p$ towards its other endpoint.
	
	We reroute~$s$ as follows:
	We remove the piece of~$s'$ that connects~$p$ with~$p'$.
	Instead of crossing~$e'$ at~$p'$, we follow~$e'$ closely towards~$D_v$.
	In~$D_v$, we immediately cross~$e'$ in order to reach~$p$.
	This redrawing is still a thrackle:
	clearly, an edge of~$\mathrm{E}(T_2)\setminus B_e$ intersects the redrawing of~$s'$ if and only if it intersects its unmodified version.
	Moreover, the rerouting cannot lead to self-intersection along~$s$ since the unmodified drawing of~$s$ crosses~$e'$ only once.
	
	The claim follows by iterating the above procedure.
\end{lemProof}

The rerouting steps  only affect regions of edge bundles.
Hence, once a crossing has been moved to a vertex vicinity, further rerouting steps cannot remove this crossing from its vicinity. 
Consequently, for each edge~$e$ of~$T_1$, we may assume from now on that either all of, or none of the intersections of~$s$ with~$B_e$ lie outside of vertex vicinities.

\begin{lemma}\label{lem:localmaximalN}
	Let~$e=uv$ be an edge of~$T_1$.
	If the edge~$s$ crosses all edges of $B_e$ inside $D_u \cup D_v$, then we can locally reroute~$s$ such that it does not intersect  $R_e$. 
\end{lemma}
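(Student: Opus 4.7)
My plan is to handle each connected component of $s \cap R_e$ separately by a local homotopy, replacing it by a curve that avoids $R_e$ entirely.

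First I would observe that, under the hypothesis of the lemma, $s$ has no crossings with any edge of $B_e$ in the interior of $R_e$. Since the only edges of $T_2$ that pass through $R_e$ are those of $B_e$, every connected component $\gamma$ of $s\cap R_e$ is a simple arc whose interior is disjoint from $T_2\setminus\{s\}$. The endpoints of $\gamma$ must therefore lie on $\partial R_e$, and in fact on $\partial D_u\cup\partial D_v$: by assumption $s$ does not cross $e$ or $a_eb_e$ inside $R_e$, and the endpoints of $s$ are vertices of $T_2$, all of which lie in vicinities rather than in $R_e$. Since $\gamma$ also avoids the three interior edges of $B_e$, it is contained in one of the four sub-strips of $R_e$ delimited by these interior edges.

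Next I would argue that both endpoints of each such arc $\gamma$ must lie on the same vicinity boundary. Assuming instead that $\gamma$ ran from $\partial D_v$ to $\partial D_u$ across an entire sub-strip, I would close it up to a Jordan curve by appending arcs of $\partial D_v$, $\partial D_u$, and one of the two $B_e$-edges bounding the sub-strip. A parity argument counting the crossings of $s$ with this Jordan curve, combined with the fact that $s$ is simple and crosses each edge of $B_e$ exactly once inside the vicinities, should then yield a contradiction.

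For an arc $\gamma$ with both endpoints on the same vicinity boundary, say $\partial D_v$, I would replace $\gamma$ by a short detour running just inside $D_v$, parallel to the sub-arc of $\partial D_v$ connecting its two endpoints. The key verification is that this detour introduces no new crossings: the relevant sub-arc of $\partial D_v$ lies between two consecutive entry points of $B_e$-edges into $D_v$ (the corners of the sub-strip containing $\gamma$) and contains no further entry points in its interior, because edges of bundles incident to $v$ other than $B_e$ enter $D_v$ through different sub-arcs of $\partial D_v$, namely those adjacent to their own bundle regions. Choosing the detour sufficiently close to $\partial D_v$ then avoids all other edges and vertices in the vicinity, so that the rerouted $s$ still crosses every edge of $T_2$ exactly once. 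Iterating over all connected components of $s\cap R_e$ yields the desired rerouting.

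The main obstacle I foresee is the parity / Jordan-curve argument in the second step ruling out arcs with endpoints on different vicinities; once this case is excluded, the remainder reduces to a routine inspection of the local structure of the Kyn\v{c}l belt construction near a vertex.
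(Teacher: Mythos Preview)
Your proposal has a fundamental misconception that undermines the whole argument: it is not true that ``the only edges of $T_2$ that pass through $R_e$ are those of $B_e$.'' By construction of the Kyn\v{c}l belt, every edge of every other bundle $B_{e'}$ (with $e'\notin\{f,e,g\}$) crosses the bundle $B_e$ inside $R_e$; the paper even states explicitly that a segment $s'$ of $s$ running through $R_e$ from $D_u$ to $D_v$ ``crosses all edges of $T_2$ in $R_e$ except for the edges in $B_f$, $B_e$, and $B_g$.'' So a component $\gamma$ of $s\cap R_e$ is \emph{not} disjoint from $T_2\setminus\{s\}$, and your sub-strip containment holds only with respect to $B_e$, not with respect to $T_2$. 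In particular, your detour in step~3 cannot simply be pushed inside a vicinity without accounting for all the crossings of $s$ with other bundles that took place along $\gamma$.

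For the same reason, the case of an arc running from $\partial D_u$ to $\partial D_v$ cannot be disposed of by a parity argument. Such arcs are not impossible; the paper's proof devotes most of its length to this case. It first observes that such an $s'$ already uses up crossings with every bundle except $B_f,B_e,B_g$, then traces the continuation of $s$ into $D_v$ and back out of $D_u$, pinning down where an endpoint of $s$ must lie (in $D_u$, but not at $u$), and finally performs an explicit rerouting along the boundary of a small region $G\subset D_u$ or reaches a contradiction with the maximality of Kyn\v{c}l's example. Your proposed Jordan curve uses arcs of $\partial D_u$ and $\partial D_v$, which are not edges of $T_2$ and may be crossed by $s$ arbitrarily often, so no parity obstruction arises. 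The ``same-side'' case is salvageable, but for a different reason than you give: every edge outside $B_e$ that meets $R_e$ actually traverses it, so a $\gamma$ with both endpoints on the same vicinity would have to cross any such edge an even number of times, hence zero times.
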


\begin{lemProof}
	If  $s$ intersects all edges of $B_e$ inside $D_u\cup D_v$, then $s$  cannot cross any edge of $B_e$ in $R_e$.
	In particular, $s$ does not intersect the boundary edges of $R_e$, namely $e$ and $a_eb_e$. Let $s'$ be a section of $s$ inside $R_e$ between two consecutive intersections with $\partial D_u\cup\partial D_v$. We distinguish two cases to modify $s$: 
	
	First, we consider the case that $s'$ enters and leaves $R_e$ via $D_u$ (or $D_v$). Note that~$s'$ crosses no edge $e'$ inside $R_e$; otherwise $s'$ would have to cross $e'$ twice (since every edge of~$T_2$ that is not in $B_e$ and intersects $R_e$ also crosses $R_e$). We replace~$s'$ by a segment connecting its endpoints . (If this introduces self-intersections of $s$, we apply the usual modification for removing multiple edge crossings.) Note that this does not introduce new crossings, since every edge that is intersected by such a segment had to be crossed within $R_e$ if it leaves $D_u$ in between the two endpoints of $s'$. 
	
	Second, we consider the case that $s'$ enters $R_e$ via $D_u$ and leaves via $D_v$. Note that $s$ contains at most one such $s'$ and that by the first case,
	we can assume that $s$ does not leave and enter $D_u$ or $D_v$ within $R_e$.
	Let $f$ and $g$ denote the preceding and succeeding edge of $e$ in $T_1$ as in \cref{fig:belt2-newAppen}. 
	By construction of $T_2$, $s'$ crosses all edges of $T_2$ in $R_e$ except for the edges in $B_f$, $B_e$, and $B_g$.

	Now consider the part $s_v$ of $s$ that is the continuation of $s'$ after leaving $R_e$ into~$D_v$. 
	The part $s_v$ either intersects $x_gz_g$, or it crosses~$e$ or~$a_eb_e$ directly after entering~$D_v$; see again \cref{fig:belt2-newAppen}.
	In the latter case, similarly to the proof of \cref{lem:partialbundleN}, we can reroute $s$ to have all those intersections with $B_e$ just before leaving $D_u$ and then follow $B_e$ close to but outside of $R_e$ to $D_v$, which proves the claim.
	Hence it remains to consider the case when $s_v$ intersects $x_gz_g$ inside $D_v$.
	
	Our next goal is to show that $s$ has one vertex in $D_u$.
	To this end, we will first argue that~$s$ has a vertex in $D_u \cup D_r$.
	Consider the part $s_u$ of $s$ before entering $R_e$ from $D_u$, that is, the part of $s$ before $s'$.
	Let $r$ ($\neq u$) denote the vertex of $T_1$ such that $u$ and $r$ are copies of the same vertex of $T$. We will distinguish two cases regarding whether $u$ is a first or a second copy:
	
	If $u$ is a first copy, $s_u$ cannot leave the region $R_u$ bounded by parts of the two edges of $T_1$ incident to $r$, and part of any edge $h$ non-incident to $u$ and $r$, as depicted in \cref{fig:localmaximal1}~(left). 
	This follows from the fact that~$h$ and the incident edges of $r$ are crossed in $R_e$ and thus cannot be crossed to exit~$R_u$. 
	(Note that $h$ exists, since $n\geq 2$.)
	It follows that $s$ has a vertex in $D_u \cup D_r$, as claimed.

	\begin{figure}[h!]
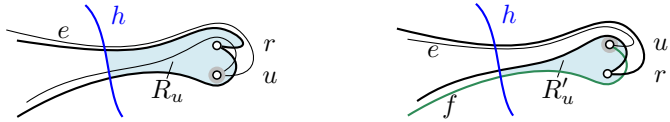

		\begin{minipage}{.49\textwidth}
			\centering
			\includegraphics[page=1]{LocalMaximal.pdf}
		\end{minipage}
		\hfill
		\begin{minipage}{.49\textwidth}
			\includegraphics[page=9]{LocalMaximal.pdf}
		\end{minipage}
		\caption{Region $s$ is trapped in after leaving $R_e$ on the side of $D_u$}
		\label{fig:localmaximal1}
	\end{figure}
	
	It remains to consider the case that $u$ is a second copy.
	In this case, the edge $e$ is internal.
	Consider the region $R'_u$ bounded by parts of $f$, parts of the edges incident to $r$, and part of any edge $h$ non-incident to $u$ and $r$; see \cref{fig:localmaximal1}~(right).
	Then $s_u$ can only leave $R'_u$ by intersecting the external edge $f$ and in order to do so, it  has to intersect the whole bundle $B_f$.
	As a consequence, $s_u$ enters the outer face and, consequently, it has to cross another full edge bundle~$B_j$ to reach a vertex, see \cref{fig:starBelt}.
	All edge bundles other than~$B_f,B_e$, and~$B_g$ are crossed fully in~$R_e$.
	By assumption, the edge~$x_gz_g$ of~$B_g$ is already crossed in~$D_v$.
	It follows that~$B_j=B_e$.
	However, since~$e$ is internal, the bundle $B_e$ is not incident to the outer face, which implies that $s_u$ is trapped in the outer face.
	Therefore, $s_u$ it cannot leave $R'_u$.
	So again, it follows that $s$ has a vertex in $D_u \cup D_r$.

	Hence, independent of whether $u$ is a first or a second copy, $s$ has a vertex in $D_u \cup D_r$.
	As in both cases, every edge incident to a vertex in $D_r$ is already crossed, 
	  one vertex of $s$ lies in $D_u$. 
	
	Next, we show that $u\notin s$. Suppose by contradiction that $u\in s$.
	Then $s$ is adjacent to $a_eb_e$ and $e$. Moreover, in order to enter $R_e$, $s$ starts to the right of~$u$ between $e$ and $a_eb_e$ and hence intersects all edges of $B_f$ before entering $R_e$; see the part of the red-shaded region in \cref{fig:localmaximal2} below $e$ or \cref{fig:vicinRO}~(right).
	Now consider the continuation of $s$ after entering $D_v$. Note that $s$ cannot end at $v$ since $s$ intersects $e$ already. 
	Hence, $s$ must leave the region in $D_v$ that is bounded by $a_eb_e$, $e$, and $a_gb_g$ in order to reach another vertex; see
	 \cref{fig:localmaximal3}.
	Since $s$ intersects $a_eb_e$ and $e$ already, it must leave the region via $a_gb_g$. By doing so, $s$ crosses all edges in $B_g$.
	Consequently, all edges of $T_2$ not in $B_e$ are intersected and the other vertex of $s$ must belong to $B_e$. However, this contradicts to the maximality of Kyn\v{c}l's example, namely \cref{proposition:Kyncl}. 
	Therefore, we can conclude that $s$ has an endpoint in $D_u$ that is different from~$u$.
	
	\begin{figure}[htb]
		\centering
		\begin{minipage}[t]{.48\textwidth}
			\centering
			\includegraphics[page=6]{LocalMaximal.pdf}
			\captionsetup{width=.95\textwidth}
			\caption{Regions to enter $R_e$; rerouting.}
			\label{fig:localmaximal2}
		\end{minipage}
		\hfil
		\begin{minipage}[t]{.48\textwidth}
			\centering
			s		\includegraphics[page=5]{LocalMaximal.pdf}
			\captionsetup{width=.95\textwidth}
			\caption{Region to leave $R_e$.}
			\label{fig:localmaximal3}
		\end{minipage}
	\end{figure}
	
	Since $s$ is not incident to $u$, the edge $s$ enters $R_e$ by intersecting exactly one of the edges $e$ or $a_eb_e$ in~$D_u$.
	We consider the region $G$ bounded by parts of $f$, $a_eb_e$, $x_ez_e$, $e$, and the boundary of $D_u$; see \cref{fig:localmaximal2}.
	If $s$ does not enter $G$ (for the last time) via~$f$, then we reroute $s$ as follows:
	If $s$ enters $G$ for the last time via $x_ez_e\cup e$, we follow $x_ez_e\cup e$ closely outside $G$ towards $R_e$, along $R_e$ and cross the appropriate set of edges of $B_e$ shortly after entering $D_v$.
	Otherwise, if $s$ enters $G$ for the last time via $a_eb_e$, we follow $a_eb_e$ closely outside $G$ towards $R_e$, along $R_e$ and cross the appropriate set of edges of $B_e$ shortly after entering $D_v$.
	Note that this leaves the set of crossed edges (in particular, the crossed edges of~$B_f$) unchanged.

It remains to consider the case that $s$ enters $G$ (for the last time) via  $f$. This implies that $s$ intersects $e$ in $D_u$, but not $a_eb_e$.
Moreover,  $s$ intersects all edges of~$B_f$ in $G$; consequently, $s$ starts at $x_e$ or $c_e$.
Therefore, $s$ cannot end at $a_e,y_e,z_e$ by the maximality of Kyn\v{c}l's example shown in \cref{proposition:Kyncl}.
Note also that in~$R_e$, $s'$ cannot lie between $e$ and $x_ez_e$, as otherwise $e$ and $x_ez_e$, as well as all edges of~$B_g$ and $B_f$ are crossed, implying that $s$ has to end in $B_e$.
This is a contradiction to the maximality of Kyn\v{c}l's example (\cref{proposition:Kyncl}).
Consequently, $s$ enters $D_v$ in the region $R$ in $D_v$ bounded by $a_eb_e$ and $x_ez_e$, as illustrated in \cref{fig:localmaximal3}.

Moreover, the region~$L$ formed by the faces of $K_e$ incident to $a_e,y_e,$ and $z_e$ (see  \cref{fig:localmaximal3}) is not intersected by $s$:
Since $s$ must cross one of $a_eb_e$ and $x_ez_e$ when leaving $R$, at most one of them can be crossed in $L$.
Moreover, since $s$ is incident to $c_e$ or $x_e$, at most one of $x_ey_e$ and $a_ec_e$ can be crossed in $L$.
If both faces of $L$ are intersected by $s$, $s$ crosses three of the four edges which yields a contradiction.
If $s$ intersects only one face of $L$, then this is either both of  $a_eb_e$ and $x_ez_e$ or both of $x_ey_e$ and $a_ec_e$; a contradiction.
Consequently, $s$ does not intersect $L$.
 
This last fact implies that~$s$ crosses precisely one of $a_eb_e$ or $x_ez_e$ in $D_v$, namely in order to leave~$R$ (note that it is not possible for~$s$ to re-enter~$R$ later on).
Since $s$ does not intersect $a_eb_e$ in $D_u$, it follows that~$s$ leaves~$R$ via $a_eb_e$ in $D_v$ and, consequently, it intersects $x_ez_e$ in~$D_u$.

\textbf{Case 1:} If $s$ starts at $c_e$, it cannot intersect $x_ez_e$ in $G$ as otherwise it travels within the top part of $R_e$.
Thus, $s$ starts in $c_e$, intersects $x_ez_e$ and then enters~$G$ (for the last time) via~$f$.
We claim that it is impossible for $s$ to intersect $x_ey_e$.
To this end, we consider two cases regarding the order in which $x_ey_e$ and $x_ez_e$ are intersected:
If $s$ intersects $x_ey_e$ after $x_ez_e$ then  it is trapped in the region formed by the parts of $s$, $x_ez_e$, $x_ey_e$, $a_ec_e$ or it intersects~$L$; see top of \cref{fig:localmaximal4}. In both cases, we obtain a contradiction.

 \begin{figure}[htb]
	\centering
	\includegraphics[page=7]{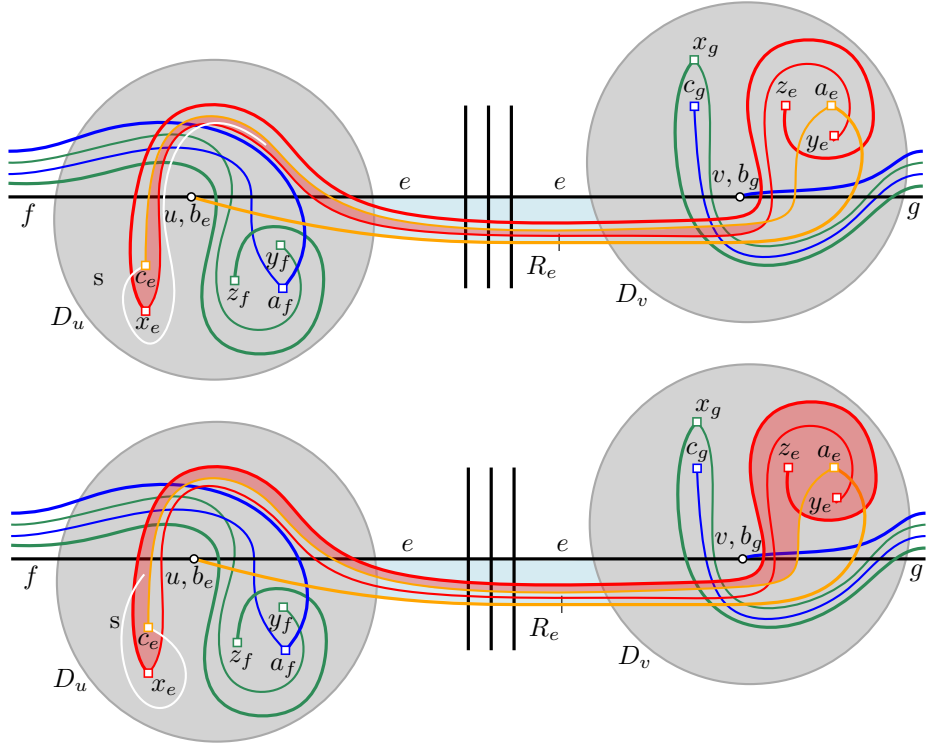}
	\caption{The two  possibilities for $s$ to start at $c_e$.}
	\label{fig:localmaximal4}
\end{figure}
It remains to consider the case that $s$ intersects $x_ey_e$ before $x_ez_e$. In this case, $s$ is trapped in the region formed by the parts of $s$, $x_ey_e$, $x_ez_e$, $a_ec_e$; see bottom of \cref{fig:localmaximal4}. Again, we obtain a contradiction and so~$s$ cannot be incident to~$c_e$.

\textbf{Case 2:} If $s$ starts at $x_e$, then it is impossible for $s'$ to be between $x_ey_e$ and $x_ez_e$ in $R_e$, since otherwise it is forced to intersect~$L$.
It follows that~$s$ intersects~$a_ec_e$ in~$D_v$.
However, this is only possible if~$s$ intersects~$L$, which again gives a contradiction.

 Altogether, we obtain that $s$ does not enter $G$ via~$f$, which  finishes the proof of \cref{lem:localmaximalN}.
\end{lemProof}

\cref{lem:localmaximalN,lem:partialbundleN} imply the following property:
\begin{restatable}{proper}{properOne}\label{proper:One}
	For every vertex $u$ and edge $e=uv$ of $T_1$ it holds that a new edge~$s$ does not enter $D_u$ within a  bundle, i.e., $s\cap R_e\cap \partial D_u=\emptyset$.
\end{restatable}

\begin{lemma}
	Let~$f=wu$ be an edge of~$T_1$. 
	If $s$ intersects any edge of $K_f$
	within $D_u$, then it intersects both the (thick) edges $a_fb_f$ and $x_fz_f$ of~$K_f$, as well as at least one of the (thin) edges~$a_fc_f$ and~$x_fy_f$ of $K_f$ within $D_u$, for an illustration refer to \cref{fig:vicinBG:2}.
	\label{lem:bluegreenN}
\end{lemma}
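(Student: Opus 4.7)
The plan is to analyse the topological layout of $K_f$ inside the vicinity $D_u$ and to combine it with \cref{proper:One} and \cref{proposition:Kyncl} to force the three specified intersections. Recall the local picture: all four edges of $K_f$ enter $D_u$ from the $f$-side of $\partial D_u$ and terminate at $a_f, y_f, z_f \in D_u$; all four Kyn\v{c}l crossings lie inside $D_u$; and the two thick edges $a_fb_f$ and $x_fz_f$, together with an arc of $\partial D_u$, bound a topological disk $\Omega \subseteq D_u$ that contains the thin edges $a_fc_f$ and $x_fy_f$, all four crossings, and the three vertices $a_f, y_f, z_f$.

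First I would show that if $s$ intersects any edge of $K_f$ in $D_u$, then $s$ must cross both thick edges within $D_u$. Let $s'$ be a maximal sub-arc of $s \cap D_u$ that contains such an intersection. By \cref{proper:One}, the endpoints of $s'$ lie either on the arc of $\partial D_u$ outside $R_e \cup R_f$ or at a vertex of $D_u$. Since $s'$ meets some edge of $K_f$, it must enter $\Omega$, which within $D_u$ is only possible through $a_fb_f$ or $x_fz_f$. Because $s$ may intersect each edge at most once, a short case analysis on whether $s$ terminates inside $\Omega$ (at one of $a_f, y_f, z_f$) or leaves $\Omega$ again forces both thick edges to be crossed: the former sub-case is excluded by applying \cref{proposition:Kyncl} to $K_f$ (after a local rerouting in the spirit of \cref{lem:partialbundleN} if needed), while in the latter sub-case the only remaining exit is the other thick edge.

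Second, I would show that at least one of the thin edges $a_fc_f, x_fy_f$ is also crossed inside $D_u$ via a Jordan-curve/parity argument. The portion of $s'$ between its two crossings with $a_fb_f$ and $x_fz_f$, together with arcs of these two thick edges and a piece of $\partial \Omega$, forms a simple closed curve $\gamma \subseteq \Omega$. The $2 \times 2$ Kyn\v{c}l crossing pattern implies that precisely one of $a_fc_f$ and $x_fy_f$ has its two Kyn\v{c}l crossing points on opposite sides of $\gamma$; that thin edge therefore meets $\gamma$ an odd number of times. Since no thin edge is allowed to cross $a_fb_f$ or $x_fz_f$ again (by the thrackle property, each such pair meets exactly once inside $\Omega$), the extra incidence must lie on the $s$-part of $\gamma$, which yields the required intersection of $s$ with a thin edge inside $D_u$. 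The most delicate point I anticipate is the sub-case in the first step where $s$ terminates at a vertex of $K_f$ inside $\Omega$: here one has to argue carefully that even after shortening $s$ to end just before that vertex, the resulting configuration contradicts \cref{proposition:Kyncl} applied to $K_f$, without interfering with any previously performed rerouting.
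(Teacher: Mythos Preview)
Your proposal contains a genuine gap in the first step. You try to dispose of the sub-case where $s$ terminates at one of $a_f,y_f,z_f$ by invoking \cref{proposition:Kyncl}, but that proposition only rules out an edge \emph{both} of whose endpoints lie in the same Kyn\v{c}l copy. Here $s$ is an arbitrary new edge of~$T_2$; nothing prevents exactly one of its endpoints from being $a_f$, $y_f$, or $z_f$ while the other lies in a completely different vicinity. So this sub-case cannot be ``excluded'' and must be handled directly. You even flag this as the most delicate point, but the fix you sketch (shorten $s$ and still contradict \cref{proposition:Kyncl}) does not work for the same reason.

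The paper's proof takes the opposite tack: it makes the endpoint-in-$\Omega$ cases the core of the argument rather than an obstacle. After observing via \cref{proper:One} that $s$ can meet $\partial\Omega$ only along the two outer edges, it does a four-way split on whether $s$ has no endpoint in~$\Omega$, or is incident to $a_f$, $z_f$, or $y_f$. In each case the claim is almost immediate: incidence to $a_f$ or $y_f$ already gives the required thin-edge intersection at that vertex, and the remaining outer edge is forced because $s$ must leave~$\Omega$; incidence to $z_f$ gives one outer edge, and reaching the other outer edge forces a crossing with $a_fc_f$ because $a_fc_f$ separates the two outer pieces of~$\partial\Omega$ inside~$\Omega$. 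The generic case (no endpoint in~$\Omega$) uses exactly that same separation observation: $a_fc_f$ cuts $\Omega$ so that one component of $\partial\Omega\setminus a_fc_f$ lies on $a_fb_f$ and the other on $x_fz_f$, hence any arc crossing both outer edges must cross~$a_fc_f$. This replaces your Jordan-curve/parity argument with a one-line separator observation, and it never needs \cref{proposition:Kyncl} at all.
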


\begin{lemProof}
	We refer to $a_fb_f$ and $x_fz_f$ as the \emph{outer} edges, and to~$a_fc_f$ and~$x_fy_f$ as the \emph{inner} edges.
	Note that within~$D_u$, the outer edges bound a region~$R$ enclosing (parts of) the inner edges; see \cref{fig:vicinBG:2}~(left).
	Since~$s$ intersects some edge of~$K_f$ in~$D_u$, it has to pass through~$\partial R$.
	By \cref{proper:One}, the edge~$s$ does not pass through~$\partial R\cap \partial D_u$.
	
	\begin{figure}[htb]
		\centering
		\includegraphics[scale=1,page=16]{figures/KynclBeltconstruction}\hfill
		\includegraphics[scale=1,page=17]{figures/KynclBeltconstruction}
		\caption{Illustration of \cref{lem:bluegreenN}.}
		\label{fig:vicinBG:2}
	\end{figure}
	
	We distinguish several cases regarding the endpoints of~$s$.
	
	\textbf{Case 1:}
	$s$ does not have an endpoint inside $R$.
	Then,~$s$ has to intersect the parts of \emph{both} outer edges that bound~$R$.
	The edge~$a_fc_f$ partitions the region~$R$ into two parts,  where the intersection of~$\partial R$ with one of the parts consists exclusively of a part of~$a_fb_f$, while the intersection of~$\partial R$ with the other part consists exclusively of a part of the outer edge~$z_fx_f$.
	Consequently, the edge~$s$ has to pass through~$a_fc_f$, which proves the claim.
	
	\textbf{Case 2:}
	$s$ is incident to~$a_f$.
	Hence, at this endpoint, it already intersects the outer edge~$a_fb_f$ and the inner edge~$a_fc_f$.
	Moreover, in order to leave~$R$, it has to cross the remaining outer edge, which proves the claim.
	
	\textbf{Case 3:}
	$s$ is incident to~$z_f$.
	Hence, it is already intersecting the outer edge~$z_fx_f$ and in order to leave~$R$, it also has to intersect the other outer edge.
	This, in turn, is only possible by intersecting the inner edge~$a_fc_f$, which proves the claim.
	
	\textbf{Case 4:}
	$s$ is incident to~$y_f$.
	Note that~$y_f$ is located in a triangular face formed three edges of~$K_f$.
	The edge~$s$ cannot leave this triangular region by passing through the inner edge~$a_fc_f$, since otherwise  it is trapped in the region bounded by itself and the two inner edges; see~\cref{fig:vicinBG:2}~(right).
	Therefore, the edge~$s$ has to leave the triangular region via one of the outer edges.
	Moreover, in order to leave~$R$, it has to pass through the remaining outer edge, which proves the claim.
\end{lemProof}

\begin{lemma}\label{lem:redorangeN}
	Let~$e=uv$ be an edge of~$T_1$. 
	If $s$ intersects exactly one of the inner edges $a_ec_e$ and $x_ey_e$ of $K_e$ 
	inside $D_u$, then it also intersects the outer edge $x_ez_e$ inside $D_u$. For an illustration, consider \cref{fig:vicinRO}~(left).
\end{lemma}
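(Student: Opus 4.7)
My plan is to mirror the case analysis of \cref{lem:bluegreenN}, but adapted to the different local arrangement of $K_e$ inside $D_u$. Recall that the vertices of $K_e$ lying in $D_u$ are $b_e = u$, $c_e$, and the degree-two hub $x_e$; since all intra-$K_e$ crossings lie in $D_v$, the four edges of $K_e$ are pairwise disjoint in $D_u$ except at $x_e$. Together with suitable arcs of $\partial D_u$, the two outer edges $a_eb_e$ and $x_ez_e$ bound a sub-disk $R \subseteq D_u$ in whose interior lie the vertex $c_e$ and portions of the two inner edges $a_ec_e$ and $x_ey_e$. By \cref{proper:One}, $s$ cannot enter $D_u$ via $R_e \cap \partial D_u$, so any entry of $s$ into $D_u$ happens either at a vertex of $K_e$ or through a part of $\partial D_u$ outside the bundle.

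By hypothesis, $s$ crosses exactly one of the inner edges inside $D_u$; I would first treat the case where this edge is $a_ec_e$, the case of $x_ey_e$ being symmetric modulo bookkeeping of endpoints. Starting from a crossing point in $s \cap a_ec_e$ and walking along $s$, one investigates how $s$ can exit $R$. The available exits from $R$ are through $a_eb_e$, $x_ez_e$, the $\partial D_u$-portion (forbidden by \cref{proper:One}), or at a vertex of $K_e$ lying in $R$. Using that $s$ crosses $a_ec_e$ at most once inside $D_u$, does not cross $x_ey_e$ in $D_u$, and that any two thrackle edges meet at most once, I would build a Jordan curve from pieces of $s$, $a_ec_e$, and a candidate exit arc, and derive a contradiction unless $s$ also crosses $x_ez_e$ inside $D_u$. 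A subcase analysis on whether $s$ is incident to one of $u, c_e, x_e$ or to none of them completes the argument.

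The main obstacle will be the subcases in which $s$ is incident to $x_e$ or $c_e$: there $s$ is automatically incident to one or two of the edges of $K_e$, and one must argue that the remaining required crossing of $x_ez_e$ occurs strictly inside $D_u$ rather than after $s$ leaves $D_u$. This relies on the chirality of the Kyn\v{c}l drawing and the placement of $c_e$ and $x_e$ on the right side of the directed path $fe$, which jointly pin down which outer edge is adjacent to $a_ec_e$ in $R$. The remaining subcases reduce to the same Jordan-curve and parity arguments already used in \cref{lem:localmaximalN,lem:bluegreenN}, and configurations where $s$ would terminate inside $K_e$ are ruled out by the maximality of Kyn\v{c}l's example (\cref{proposition:Kyncl}).
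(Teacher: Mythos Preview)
Your plan is workable in spirit, but you have chosen the wrong bounding region, and this is what drives you into the auxiliary Jordan-curve gymnastics, the dubious symmetry claim, and the unnecessary appeal to \cref{proposition:Kyncl}. The paper does \emph{not} bound its region by the two outer edges $a_eb_e$ and $x_ez_e$. Instead it takes the region $R'\subseteq D_u$ bounded by the arcs of $x_ey_e$ and $x_ez_e$ together with a piece of $\partial D_u$: a cone with apex $x_e$ that encloses $c_e$ and hence the whole arc of $a_ec_e$ in $D_u$. With this choice the proof is a three-line parity argument. Hitting either inner edge forces $s$ to meet $R'$; by \cref{proper:One} the $\partial D_u$-portion of $\partial R'$ is off limits; so if $s$ is incident to neither $x_e$ nor $c_e$, parity forces one crossing through each of $x_ey_e$ and $x_ez_e$; if $s$ is incident to $x_e$ it already meets $x_ez_e$ there; and if $s$ is incident to $c_e$, the hypothesis that only one inner edge is hit forces the exit from $R'$ to be through $x_ez_e$.

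With your region $R$, by contrast, an exit through $a_eb_e$ is legal and says nothing about $x_ez_e$, so you are left to manufacture extra closed curves to rule it out. Your claim that the $x_ey_e$ case is ``symmetric'' to the $a_ec_e$ case is also off: $x_ey_e$ shares the vertex $x_e$ with the target edge $x_ez_e$ while $a_ec_e$ does not, so under your decomposition the two cases are genuinely different. None of this is fatal, but the paper's choice of $R'$ collapses all of it into three short cases and needs nothing beyond the local picture in $D_u$ and \cref{proper:One}.
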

\begin{lemProof}
	Consider the region $R'$ defined by the sections of $x_ey_e$ and $x_ez_e$ within $D_u$ and parts of the boundary of $D_u$. Note that $R'$ encloses the edge $c_ea_e$.
	To intersect $a_ec_e$ or $x_ey_e$  within $D_u$, $s$ must intersect~$R'$.
	Further, 
	by \cref{proper:One}, $s$ does not enter or leave $R'$ via $\partial D_u$. 
	Thus, if $e$ is not incident to $x_e$ nor to~$c_e$, then it intersects $\partial R'$ twice via both  edges $x_ey_e$ and $x_ez_e$. 
	Likewise, if $e$ is incident to the vertex $x_e$ then $e$ intersects both edges $x_ez_e$ and $x_ey_e$. 
	Otherwise, $e$ is incident to the vertex $c_e$ and must leave $R'$ via $x_ez_e$ to avoid $x_ey_e$.
\end{lemProof}

\begin{figure}[htb]
	\includegraphics[scale=1,page=12]{figures/KynclBeltconstruction}
	\includegraphics[scale=1,page=13]{figures/KynclBeltconstruction}
	\caption{Illustration of \cref{lem:redorangeN} and \cref{lem:origbundleN}.}
	\label{fig:vicinRO}
\end{figure}

\begin{lemma}
	Let~$e=uv$ be an edge of~$T_1$. 
	Then $s$ intersects either both $e$ and $a_eb_e$ or none of them inside $D_u$. For an illustration, consider \cref{fig:vicinRO}~(right).
	\label{lem:origbundleN}
\end{lemma}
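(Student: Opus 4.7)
\begin{proofsketchLem}
The plan is a topological parity argument using a wedge inside $D_u$. Since $b_e=u$, the edges $e$ and $a_eb_e$ both emanate from $u$ and leave $D_u$ through the bundle arc $A := R_e\cap\partial D_u$ along which they run in parallel. Together with $A$ they bound a topological disk $W_e\subseteq D_u$ whose open interior contains exactly the two $T_2$-vertices $c_e$ and $x_e$. By \cref{proper:One}, $s$ cannot cross $A$, so inside $D_u$ any passage of $s$ across $\partial W_e$ must be an interior crossing with $e$ or $a_eb_e$, each of which occurs at most once.

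The first step will be to dispatch the easy case in which $s$ is incident to $u$: then $s$ shares the vertex $u\in D_u$ with both $e$ and $a_eb_e$, and the ``both'' conclusion is immediate. Assuming now that $s$ is not incident to $u$, I would let $V\in\{0,1,2\}$ denote the number of endpoints of $s$ in the interior of $W_e$, and $C$ the number of interior crossings of $s$ with $e\cup a_eb_e$ inside $D_u$. A standard arc-counting on $s\cap W_e$ (all arc endpoints lie at $c_e,x_e$ or at proper crossings on $e\cup a_eb_e$, since $A$-crossings are forbidden) gives $C\equiv V\pmod 2$, and the thrackle bound forces $C\le 2$. In particular, for $V\in\{0,2\}$ one automatically gets $C\in\{0,2\}$, which is exactly the conclusion of the lemma.

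The second step is to rule out the residual case $V=1$. By symmetry assume $s$ is incident to $c_e$ but not $x_e$. Because crossings of $s$ with $e$ and $a_eb_e$ cannot occur on the bundle arcs, and because the two interior crossings of $s$ with $\{e,a_eb_e\}$ must sum to $2$ in $D_u\cup D_v$ (they cannot sit in $R_e$), running the symmetric parity count in $D_v$ against the region bounded by $e$, $a_eb_e$, and the bundle arc on $\partial D_v$ (whose relevant interior contains $a_e,y_e,z_e$) forces the second endpoint of $s$ to lie in that interior. Since $a_ec_e$ is already an edge of $K_e$, this narrows the candidates down to $s\in\{c_ey_e,c_ez_e\}$.

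The last step---ruling out these two candidates---will be the main obstacle. The plan is to imitate the trapping argument from \cref{proposition:Kyncl}. For $s=c_ey_e$, the edges of $K_e\cup\{e\}$ that $s$ must cross properly are exactly $e$, $a_eb_e$, and $x_ez_e$, while the shared-vertex intersections $s\cap a_ec_e=\{c_e\}\subset D_u$ and $s\cap x_ey_e=\{y_e\}\subset D_v$ let me apply \cref{lem:redorangeN} to conclude that the crossing of $s$ with $x_ez_e$ must also lie in $D_u$. Tracing $s$ from $c_e\in W_e$ through this forced crossing of $x_ez_e$ and the single $V=1$-crossing of $e$ or $a_eb_e$, the curve is pushed into a face of $K_e$ whose entire bounding $K_e$-edges have been consumed, so it cannot reach $y_e$---precisely the trapping obstruction behind \cref{proposition:Kyncl}. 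The subtlety is that the faces of $K_e$ are subdivided by the other bundles in $T_2$, but each subdividing edge is crossed only once by $s$, which should suffice to close off every escape route. The candidate $c_ez_e$ is handled symmetrically, and the lemma follows.
\end{proofsketchLem}
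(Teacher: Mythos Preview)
Your core idea---the parity argument on the wedge bounded by $e$, $a_eb_e$, and the bundle arc $A=R_e\cap\partial D_u$---is exactly the paper's approach. However, you have a geometric error: the interior of that wedge contains \emph{no} vertex of $T_2$; the only $T_2$-vertex on its closure is $u=b_e$ at the apex. The vertices $c_e$ and $x_e$ lie on the \emph{other} side of $e$ inside $D_u$: by construction, the edges of $K_e$ that are not incident to $b_e$ enter $D_u$ along the bundle, then cross $e$ (and afterwards $f$) before reaching $c_e$ and $x_e$. This is precisely why the paper can say ``the region $R$ \ldots\ is only incident to $u=b_e$'' and finish in one line via \cref{proper:One}.

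With the correct picture, your parameter $V$ is automatically $0$ whenever $u\notin s$, so your Step~2 already yields $C\in\{0,2\}$ and the lemma is proved. All of your Step~3 (the $V=1$ case, the reduction to $s\in\{c_ey_e,c_ez_e\}$, and the trapping arguments) addresses a case that never arises. Beyond being unnecessary, those arguments rest on the same mistaken placement of $c_e,x_e$, so the detailed face analysis you sketch would not match the actual drawing even if one wanted to carry it out. In short: same method as the paper, but a misread of the local picture sent you on a long detour; delete the claim that $c_e,x_e\in W_e$ and the proof collapses to the paper's two lines.
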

\begin{lemProof}
	Note that the region $R$ defined by the sections of  $e$ and $a_eb_e$ inside $D_u$ and the boundary of $D_u$ is only incident to $u=b_e$.
	The claim clearly holds if~$u\in s$.
	So suppose that~$u\notin s$.	
	Then, the claim follows from \cref{proper:One}.
\end{lemProof}

\begin{lemma}	\label{lem:completebundleN}
	Let~$e=uv$ be an edge of~$T_1$. 
	If $s$ intersects some edge of $B_e$ in~$D_u$ or~$D_v$, it intersects all edges of~$B_e$ inside this vicinity.
\end{lemma}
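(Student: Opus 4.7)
My approach is to combine the three preceding structural lemmas. Without loss of generality I assume $s$ intersects some edge of $B_e$ inside $D_u$; the case of $D_v$ is handled symmetrically. The plan is first to show that $e$, $a_eb_e$, $x_ez_e$, and at least one inner edge of $K_e$ are all crossed in $D_u$, and then to argue by contradiction that the remaining inner edge must be crossed there as well. The main obstacle will be this last step, since \cref{lem:bluegreenN} on its own only guarantees \emph{one} inner-edge crossing inside $D_u$.

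For the first part, I observe that a crossing of $s$ with $e$ inside $D_u$ forces, by \cref{lem:origbundleN}, a crossing of $s$ with $a_eb_e$ inside $D_u$; so in every case $s$ meets some edge of $K_e$ in $D_u$. Applying \cref{lem:bluegreenN} (or its symmetric analog for the source-endpoint vicinity, which holds by the mirror structure of the Kyn\v{c}l belt construction) then forces both outer edges $a_eb_e$ and $x_ez_e$, together with at least one of the inner edges $a_ec_e, x_ey_e$, to be intersected inside $D_u$. A further appeal to \cref{lem:origbundleN} adds $e$ to this list, accounting for four of the five edges of $B_e$.

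For the final inner edge, I argue by contradiction: assume $s$ intersects exactly one of $a_ec_e, x_ey_e$ inside $D_u$, say $a_ec_e$ but not $x_ey_e$. Recall the observation following \cref{lem:partialbundleN}: we may assume that, for each edge $e'$ of $T_1$, either all or none of the intersections of $s$ with $B_{e'}$ lie outside the vertex vicinities. Since $s$ has at least one crossing with $B_e$ inside $D_u$, every crossing of $s$ with an edge of $B_e$ must lie in $D_u \cup D_v$; in particular, the unique crossing of $s$ with $x_ey_e$ lies in $D_v$. Applying \cref{lem:bluegreenN} inside $D_v$, where it applies directly since $v$ is the target endpoint of $e$, then forces both outer edges $a_eb_e$ and $x_ez_e$ to be crossed again in $D_v$; but these edges have already been crossed in $D_u$, contradicting the thrackle property. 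Hence both inner edges are hit inside $D_u$, and all five edges of $B_e$ are crossed by $s$ there.
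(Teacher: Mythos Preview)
Your overall plan---combine \cref{lem:bluegreenN}, \cref{lem:redorangeN}, \cref{lem:origbundleN}, and the all-or-none assumption after \cref{lem:partialbundleN}---is exactly the paper's, and your final contradiction step (force the missing inner edge into $D_v$, apply \cref{lem:bluegreenN} there, and obtain a second crossing of the outer edges) is precisely the paper's Case~3. The problem lies earlier.

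The Kyn\v{c}l belt construction is \emph{not} symmetric between the source vicinity $D_u$ and the target vicinity $D_v$ of an edge $e=uv$: the vertices $a_e,y_e,z_e$ and all crossings among edges of $K_e$ live in $D_v$, while only $c_e,x_e$ (and $b_e=u$) lie in $D_u$. \Cref{lem:bluegreenN} is stated and proved only for the target side (it is phrased for $f=wu$ in $D_u$, i.e., for the incoming edge); the companion lemma on the source side is \cref{lem:redorangeN}, and it is strictly weaker---it only yields the outer edge $x_ez_e$ from an inner one, not both outers plus an inner from an arbitrary $K_e$-crossing. So your appeal to a ``symmetric analog for the source-endpoint vicinity, which holds by the mirror structure'' is unjustified, and with it the first part of your argument (getting $a_eb_e$, $x_ez_e$, and an inner edge inside $D_u$) collapses. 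Likewise, the ``without loss of generality $D_u$'' reduction is not free, for the same reason.

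The paper avoids this by organising the case split on the \emph{target} side: let $k$ be the number of $K_e$-edges crossed in $D_v$. If $k=0$ or $k=4$, the all-or-none assumption together with \cref{lem:origbundleN} (which is stated at $D_u$) finishes both vicinities at once. If $1\le k\le 3$, \cref{lem:bluegreenN} at $D_v$ forces $k=3$ with a single inner edge missing; that inner edge must then lie in $D_u$, and \cref{lem:redorangeN} there produces a second crossing of $x_ez_e$---the same contradiction you found. The point is that \cref{lem:bluegreenN} is only ever invoked where it actually applies.
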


\begin{lemProof}
	We distinguish three cases regarding the number~$k$ of edges of~$K_e$ that are intersecting~$s$ in~$D_v$.
	
	\textbf{Case 1:} $k=0$.
	If at least one edge of~$K_e$  is intersecting~$s$ in~$R_e$, the claim follows by the assumption that either all edges of~$B_e$ are intersecting~$s$ in~$R_e$ or none (cf. \cref{lem:partialbundleN}).
	Otherwise, if no edge of~$K_e$ is intersecting~$s$ in~$R_e$, the claim follows by \cref{lem:origbundleN}. 
	
	\textbf{Case 2:} $k=4$.
	The claim follows by \cref{lem:origbundleN} and the assumption that either all edges of~$B_e$ are intersecting~$s$ in~$R_e$ or none.
	
	\textbf{Case 3:} $1\le k\le 3$.
	By \cref{lem:bluegreenN} applied to~$e$, we have~$k=3$ and we know the remaining edge is an inner one. By \cref{lem:partialbundleN}, we may assume $s$ crosses this edge in $D_u$. Now \cref{lem:redorangeN} implies a contradiction.
\end{lemProof}

\begin{restatable}{proper}{properTwo}\label{lem:newcompleteN}
	Let $e$ and $f$ be two edges of $T_1$ sharing an endpoint $u$.
	If $s$ has one of its endpoints $v$ in $D_u\setminus \{u\}$, it intersects all edges of $B_e\cup B_f$ inside $D_u$. Moreover, $v\in\{a_f,y_f,z_f\}$.
\end{restatable}
\begin{lemProof} 
	\cref{lem:completebundleN} implies that if $s$ intersects one edge of $B_e$ or $B_f$ inside $D_u$ it intersects all edges of this bundle inside $D_u$. Clearly, $s$ intersects the bundle that belongs to $v$; note that $v$ belongs either to $B_e$ or to $B_f$, not to both since $v\neq u$.
	For a contradiction, assume $s$ intersects only one of the bundles $B_e$ or $B_f$. We disprove this by a short case analysis:
	
	First we consider the case that $v\in\{c_e,x_e\}$. In fact, we show that this case does not occur.
	As before, let $R'$ be the region defined by the parts of $x_ey_e$ and $x_ez_e$ within $D_u$ and parts of the boundary of $D_u$; see \cref{fig:vicinRO}~(left).
	Since $s$ intersects all edges of the bundle $B_e$ within $D_u$, $s$ intersects $R'$. If $s$ starts at $c_e$, i.e., strictly inside $R'$,  it intersects only one of the boundary edges incident to $x_e$; a contradiction. If $e$ starts at $x_e$ and intersects $a_ec_e$, then $s$ is trapped inside $R'$. Thus $s$ does not intersect $a_eb_e$.
	
	Second, we consider the case that $v\in\{a_f,y_f,z_f\}$.
	In this case, $s$ intersects all edges of $B_f$ by \cref{lem:completebundleN}. Let the parts of $e$ and $f$ inside $D_u$ partition $D_u$ in a top half (outside $T_u$) and a bottom half (inside $T_u$).
	If no edge of $B_e$ is intersected, since by \cref{proper:One}, $s$ cannot leave $D_u$ inside the bundle $B_f$, $s$ can cross $f$ only by crossing all edges of $K_f$ in the top part next to the intersection with $f$ (take the union of the regions in \cref{fig:vicinRO} as forbidden region to see this); a contradiction, since $v\in\{a_f,y_f,z_f\}$ lies in the bottom half of $D_u$. Thus, $s$ also intersects all edges of $B_e$ in $D_u$.
\end{lemProof}

Now, we come to our final property.
\begin{restatable}{proper}{properThree}\label{lem:replacement}
	If there exists a new edge $s$ with vertices in $T_2$ such that $T_2\cup s$ is a thrackle, then there exists an edge $s'$ such that  $T_2\cup s'$ is a thrackle, the vertices of $s'$ belong to $T_1$, and the vertices of $s'$ do not share an edge in $T_1$.	
\end{restatable}

\begin{lemProof}
	Let $(U,V):=s$. If both $U,V$ are vertices of $T_1$, then the claim is proved. Therefore, we may assume that $U$ does not belong to $T_1$. Let $u$ denote the vertex of $T_1$ such that $U$ is contained in $D_u$; likewise, let $v$ denote the vertex of $T_1$ such that $V$ is contained in $D_v$. 
	
	\textbf{Case 1:} $U$ and $V$ lie in the same vicinity, i.e. $u=v$:
	
	By \cref{lem:newcompleteN}, it follows that $V=u$; otherwise $U,V\in\{a_f,y_f,z_f\}$ for some edge $f$ in $T_1$ incident to $u$
	which yields a contradiction to the maximality of the  Kyn\v{c}l example proved in \cref{proposition:Kyncl}. If $V=u$, then $s$ cannot leave $D_u$ in the top half, since it would have to cross all edges of $K_f$ to do so. Therefore, both ends of $s$ leave $D_u$ inside the triangular region $T_u$ of $u$ (see \cref{fig:starBelt}). $T_u$ is bounded by $e,f$ and another edge, say $l$. The only way to leave $T_u$ would be for both ends to cross $l$. This implies a contradiction.
	
	\textbf{Case 2:} $u\neq v$ share an edge in $T_1$:
	
	Remember $U\neq u$. If $V\neq v$, then by \cref{lem:newcompleteN}, $s$ intersects all edges of $B_e$ in both $D_u$ and $D_v$; a contradiction. Similarly, if $V=v$,
	then $s$ intersects all edges of $B_e$ in $D_u$ 
	and $e=uv=uV$ in $D_v$; a contradiction.
	
	\textbf{Case 3: }There is no edge incident to $u$ and $v$:
	
	We use the fact that $s$ intersects all edges present in $D_u$ (by \cref{lem:newcompleteN}) to reroute $s$ inside $D_u$. As before, let the sections of $e$ and $f$ inside $D_u$ partition $D_u$ in its top and bottom half.
	
	Let $w_1,w_2,\dots,w_{k}$ denote the sequence of intersections of $s$ with $\partial D_u$, where $w_1$ is closest to $V$ on $s$. Since the vertex $U$  of $s$ is inside $D_u$, $k$ is an odd integer. Moreover, no section $w_{2i-1}w_{2i}$ (inside $D_u$)  connects the top and bottom half: Suppose it does. Then $w_{2i-1}w_{2i}$ intersects all edges of $B_e$ or $B_f$. By \cref{lem:newcompleteN}, we know that $U\in\{a_f,y_f,z_f\}$ and hence $w_{2i-1}w_{2i}$ intersects all edges of $B_e$. However, then no section of $s$ can enter $D_u$ outside $R_f$, intersect $f$ and end at $U\in\{a_f,y_f,z_f\}$, similarly to the second case in \cref{lem:newcompleteN}.
	
	Consequently, $w_1w_2,\dots,w_{k-2}w_{k-1}$ form  pairs contained in the top or bottom part that are additionally nested since $s$ has no self-intersections. We replace the sections $w_{2i-1}w_{2i}$ of $s$ by curves close to the boundary of $D_U$ such that no edge of $D_u$ is intersected. 
	
	The last part $w_kU$ we reroute as follows, see also \cref{fig:replacement}: If $w_k$ is contained in the top half of $D_u$, we replace the part of $s$ inside $D_u$ by a straight line segment that connects $u$ and $\partial D_u\cap s$; note that this segment intersects all edges in $D_u$. If $w_k$ is contained in the bottom half of $D_u$, we replace $w_kU$ inside $D_u$ with a curve from $u$ to $\partial D_u\cap s$ as illustrated; note that this curve intersects all edges of $D_u$.

	\begin{figure}[htb]
		\centering
		\includegraphics[scale=1,page=14]{figures/KynclBeltconstruction}
		\includegraphics[scale=1,page=15]{figures/KynclBeltconstruction}
		\caption{Illustration of \cref{lem:replacement}.}
		\label{fig:replacement}
	\end{figure}
	
	After this replacement, the new edge $s'$ intersects the same set of edges as $s$. Therefore, $T_2+s'$ is a thrackle. Moreover, the vertex $U$ of $s$ is replaced by the vertex $u$ of $s'$ where $u$ is in $T_1$. If $V\neq v$, we apply the same rerouting for the other vertex $V$ of $s$.
\end{lemProof}
\cref{lem:replacement} implies that if $T_1$ is maximal, then $T_2$ is maximal. Therefore, \cref{proposition:max1} implies that 
	$T_2$ is a maximal thrackle with $\varepsilon(T_2)=\frac{5}{6}$, which concludes the proof of Theorem~\ref{thm:4/5}.
	\end{proof}
 
 \section{Ongoing work and open problems}\label{sec:outlook}
We believe that by \emph{repeating the Kyn\v{c}l belt construction}, one obtains a class of maximal trackles such that for every $c$, there exists maximal thrackle $T$ with $\varepsilon(T)<\frac{4}{5}+c$. The idea is as follows: Since the original edges of $T_1$ are preserved in $T_2$, we can apply the Kyn\v{c}l belt construction to $T_2$ by using only the edges of~$T_1$. This results in a thrackle $T_3$. To do this, we find new, smaller vicinities around every vertex of $T_1$ which are free of other vertices and non-incident edges.
For an illustration, consider \cref{fig:doubleKynclbeltPaper}.  

\begin{figure}[htb]
	\centering
	\includegraphics[page=8]{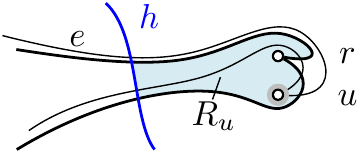}
	\caption{Applying the Kyn\v{c}l belt construction multiple times.}
	\label{fig:doubleKynclbeltPaper}
\end{figure}
By repeating the procedure $k$ times, we obtain a trackle $T_k$ with \[\varepsilon(T_k)=\frac{2n+1+4k}{2n+1+5k}=\frac{4}{5} + \frac{2n+1}{10n+5+25k} <\frac 45 + c \Leftrightarrow k>\frac {(1 -5c)(2n+1)}{25c}.\]
Showing that $T_k$ is (potentially) maximal is more involved and ongoing work, in which we are done with proving most appearing cases.

\noindent We conclude with a list of interesting open problems:~
\begin{compactitem}
	\item What is the minimal number of edges that a maximal thrackle without isolated vertices can have? Can such a maximal thrackle $T$ have $\varepsilon(T)<\frac{4}{5}$? 
	\item Is it true that for every maximal thrackle $T$ it holds that  $\varepsilon(T)>\frac{1}{2}$ or do maximal matching thrackles (other than~$K_{1,1}$) exist? 
		For geometric thrackles, this question has been very recently answered in the negative~\cite{nomatchingthrackle}, but it remains open for topological thrackles.
	\item Does Conway's conjecture hold?
\end{compactitem}

\subsection*{Acknowledgments}
O. Aichholzer and B. Vogtenhuber partially supported by Austrian Science Fund within the collaborative DACH project \emph{Arrangements and Drawings} as FWF project \mbox{I 3340-N35}. Travel costs of Felix Schr\"oder were supported by DFG Grant FE 340/12-1.

This research was initiated during the 15th European Research Week on Geometric Graphs (GGWeek 2018) at  Haus Tornow am See (M\"arkische Schweiz, Germany) and Freie Universit\"at Berlin.
The workshop was supported by the Deutsche Forschungsgemeinschaft (DFG) through the Research Training Network \emph{Facets of Complexity} and the collaborative DACH project
\emph{Arrangements and Drawings}.
We thank the organizers and all participants for the stimulating atmosphere.
In particular, we thank Andr\'e Schulz for proposing the study of maximal thrackles as a research question,
and Viola M\'esz\'aros and Stefan Felsner for joining some of our discussions and contributing valuable ideas.

\bibliographystyle{splncs04}
\bibliography{bib}

\end{document}